\numberwithin{equation}{section}
\newtheorem{thm}{Theorem}[section]
\newtheorem{prop}{Proposition}[section]
\newtheorem{lem}{Lemma}[section]
\newtheorem{rem}{Remark}[section]
\newtheorem{defn}{Definition}[section]
\newtheorem{exm}{Example}[section]
\newtheorem{asm}{Assumption}[section]
\newcommand{\ad}{&\!\!\!\disp}
\newcommand{\barray}{\begin{array}{ll}}
\newcommand{\earray}{\end{array}}
\newcommand{\disp}{\displaystyle}
\newcommand{\bed}{\begin{displaymath}}
\newcommand{\eed}{\end{displaymath}}
\newcommand{\bea}{\bed\begin{array}{rl}}
\newcommand{\eea}{\end{array}\eed}
\newcommand{\thmref}[1]{Theorem~{\rm \ref{#1}}}
\newcommand{\lemref}[1]{Lemma~{\rm \ref{#1}}}
\newcommand{\propref}[1]{Proposition~{\rm \ref{#1}}}
\newcommand{\defnref}[1]{Definition~{\rm \ref{#1}}}
\newcommand{\asmref}[1]{Assumption~{\rm \ref{#1}}}
\def\one{{\hbox{1{\kern -0.35em}1}}}
\begin{document}

\title{Utility Maximization of an Indivisible Market with Transaction
  Costs} 

\author{Qingshuo Song\thanks{Department of Mathematics, City
    University of Hong Kong. 83 Tat Chee Avenue, Kowloon Tong, Hong
    Kong, {\tt
      song.qingshuo@cityu.edu.hk}.} \and G. Yin\thanks{Department of
    Mathematics, Wayne State University, Detroit, Michigan 48202, {\tt
      gyin@math.wayne.edu}.
    Research of this author was supported in part by the National
    Science Foundation under DMS-0907753.} \and Chao
  Zhu\thanks{Department of Mathematical Sciences, University of
    Wisconsin-Milwaukee, Milwaukee, WI 53201, {\tt zhu@uwm.edu}.}}
\maketitle

\begin{abstract}
  This work takes up the challenges of utility maximization problem
  when the market is indivisible and the transaction costs are
  included. First there is a so-called solvency
  region given by the minimum margin requirement in the problem formulation.
  Then the
  associated utility maximization is formulated as an optimal
  switching problem. The diffusion turns out to be degenerate and
  the boundary of domain is an unbounded set.
One no longer has the continuity of the value function without
posing further conditions due to the degeneracy and the dependence
of the random terminal time on the initial data. This paper
provides sufficient conditions under which the continuity of the
value function is obtained. The essence of our approach is to find
a sequence of continuous functions locally uniformly converging to
the desired value function. Thanks to continuity, the value
function can be characterized by using the notion of viscosity
solution of certain quasi-variational inequality.

  \bigskip
  \noindent {\bf  Key Words.} Utility optimization, indivisible market,
  transaction cost, continuity of value function, quasi-variational inequality.

  \bigskip
  \noindent{\bf Mathematics Subject Classification.}
  60H10, 60H30, 93E20, 91G80.

  \bigskip
  \noindent{\bf Short Title.} Utility Maximization of an Indivisible
  Market with Transaction Costs

\end{abstract}

\newpage

\setlength{\baselineskip}{0.25in}
\section{Introduction}
The study of utility optimization has a long history. Utility
maximization under the setup of Black-Scholes type models can be
traced back to \cite{Mer71}. By now, it is widely understood that
in a complete market, the optimal strategy of this problem is
attainable if an investor can make infinitesimally small
adjustments of the position frequently. Recent study indicates
that market imperfections such as transaction costs and asset
indivisibility affect virtually every transaction and generate
costs, which interfere with the trades that rational individuals
would make in a complete market (see \cite{Deg05}). As was alluded
to in the above, the two main assumptions, namely zero transaction
costs and infinite divisibility of an asset, are crucial. Failure
in either of the two assumptions results in an incomplete market,
so Merton's optimal strategy becomes non-attainable.

From a practical point of view, although the technical advancement
and the on-line trading make the transaction costs not significantly
influential, the transaction costs can hardly be ignored. As for
the other assumption, it is almost evident that an asset cannot be
infinitely divisible in any practical situation.

Incorporating transaction cost in utility maximization  has
received much attentions from both researchers and practitioners
in the past few decades. In fact, there is a vast literature on
this subject; see for example,  \cite{Cha07a,
  Cha07b, Kor98,  LMP07,   MSXZ08a,
OS02,SS94, TZ},  and the references
therein. In contrast, there are relatively few works on asset
indivisibility. Two of the exceptions are \cite{RB80} and  
\cite{Wal00}. It should be noted that most existing works on asset
indivisibility have focused on discrete-time models. Our goal in
this paper is to take up the challenges in both parts. We will
 characterize the solution of the utility maximization problems of
 an
indivisible  market with transaction cost in continuous time.

To incorporate the asset indivisibility, the stock shares in the
portfolio are restricted to a finite set of integers
$\mathcal{K}$
(to be defined in \eqref{eq:cK}).
In addition,
there is a minimum maintenance margin requirement for the
investigator; the corresponding condition is termed as a
solvency region $O$
(to be defined in \eqref{eq:Omega}). The associated utility
maximization is modeled as an optimal switching problem on
degenerate diffusion in the restricted unbounded domain. It is
noted that with nondegenerate diffusion, the value function can be
shown to belongs to, for example, $ W^{1,\infty}(O) \cap
W^{2,\infty}_{{\rm loc}} (O)$  for a bounded domain  \cite{EF79},
and $W^{1,\infty}$ for a one-dimensional unbounded domain
\cite{Pha07}.

In our work, one cannot obtain the continuity of the value
function $V$ of \eqref{eq:obj} for free since the underlying
process $(X, Y, Z)$ is degenerate and the random terminal time
$\tau$ of \eqref{eq:tau} depends on the initial condition
$(t,x,y,z)$; see the counterexample in \cite[Example 4.1]{BSY09}
with the absence of optimal switching. As a result, to
characterize the value function, we use the notion of viscosity
solution for quasi-variational inequality. It turns out to be
crucial to show the continuity of the value function with some
appropriate conditions.

The continuity of the value function in a bounded domain has been
widely discussed within the framework of classical stochastic
control theory without switching costs, known as stochastic exit
problem. When the domain is bounded, a sufficient condition for
the continuity of the value function is provided in \cite[p.
205]{FS06} by using a probabilistic approach, where the continuity
was presented in terms of the drift of the underlying diffusion.
In contrast, a generalization of the continuity in \cite{BSY09}
gave a condition taking into consideration of both the drift and
diffusion coefficients. Along another line, the stochastic
exit-time control problem has been studied by using purely
analytical methods in \cite{BB95, BCI08, Kov09, LM82} under
various setups.

In the current work, we use a probabilistic approach similar to
that of \cite{BSY09} and \cite{FS06}. We focus on utility
optimization for indivisible cost with transaction costs. The
essence depends on the verification of a continuity condition. We
note that the main effort of \cite{BSY09} is to find a sequence of
continuous functions uniformly converging to the desired value
function, taking into consideration of the sample path properties
of the diffusion processes. In this procedure,  Dini's theorem
plays an essential role to obtain the uniform convergence.
However, this approach is not directly applicable to our work.
This is because the boundary of the domain $\partial O$ is
unbounded. Because of the domain being non-compact, Dini's theorem
cannot be used. Therefore, one needs asymptotic properties of the
approximating functions. Here we devise an approximation sequence
$V^\varepsilon$ (see \lemref{l-pve}), and obtain the continuity of
$V$ by local uniform estimates using $V^\varepsilon$. The details
are in \thmref{t-vcon} in what follows.

The rest of the work is arranged as follows. The precise
formulation of the problem is given in Section \ref{sec:for}.
Section \ref{sec:cont} is devoted to continuity of the value
function. Section \ref{sec:val} analyzes properties of the value
functions. In particular, we show that the value function is the
unique viscosity solution of the quasi-variational inequality
\eqref{eq:qvi} with boundary-terminal condition  \eqref{eq:btc}.
Section \ref{sec:rem} makes some further remarks to conclude the
paper. At the end, supplemental results are included in an
appendix in Section \ref{sec:app}.

\section{Problem Formulation}\label{sec:for}
Let $(\Omega, \mathcal{F}, \mathbb{P}, \mathbb{F}
)$ be a complete filtered probability space
on which is defined a standard Brownian motion $W$,
where $\mathbb{F}=\{\mathcal{F}_t\}_{t\ge 0}$.
We assume that
the filtration $\mathbb{F}$ is generated by $W$, augmented by all
the $P$-null sets as usual. For simplicity, we assume that the
financial market consists of only two assets, a bank account with
zero interest and a risky asset. Suppose that $X^{t,x}$, the price
of the risky asset, is given by
\begin{equation}
  \label{eq:stkprice}
  X(s) = x + \int_t^s b(r, X(r)) dr + \int_t^s \sigma(r, X(r)) dW(r),
\end{equation} where $x>0$ denotes the initial price.
A bank account with positive interest can be considered in the
current setup. Other than notational complexity, such a
formulation does not introduce essential difficulties as long as
the interest rates are not stochastic. Therefore, for simplicity,
we use zero risk-free interest rate in this paper. Throughout the
paper, we use the following standing assumptions. The objective
function is an expected utility with transaction costs taken into
consideration, whose precise form will be given shortly.

\begin{asm} {\rm
    \label{a-general}\quad
    \begin{enumerate}
    \item There exists a
      $C_1>0$ such that the
      drift  $b$ and  the volatility $\sigma$ satisfy
      \begin{equation}
        \label{eq:asmbs}
        b(s,0) = \sigma(s,0) = 0, \hbox{ and }
        |b(s, x_1)-b(s, x_2)| +
        |\sigma(s, x_1) - \sigma(s,x_2)| \le C_1 |x_1 -x_2|.
      \end{equation}

    \item  The transaction cost function $c: \mathbb{Z} \mapsto
      \mathbb{R}$ satisfies
      $$c(0) = 0, \quad c(z)>0 \  \forall z\neq 0, \ \hbox{ and } \
      c(z_1) + c(z_2) \ge
      c(z_1 + z_2).$$
    \item
      The risk-averse utility function $U:[0,\infty) \to [0, \infty)$
      satisfies
      \begin{equation}
        \label{eq:utility}
        U(0) = 0, \ U'(x) > 0, \  U''(x) < 0, \ \lim_{x\to \infty} U'(x)
        = 0, \ \hbox{ and } \ \lim_{x\to 0} U'(x) = \infty,
      \end{equation}
      where $U'$ and $U''$ denote the first and the second
      derivatives of $U$ with respect to $x$, respectively.
    \end{enumerate} }
\end{asm}

With condition \eqref{eq:asmbs}, the price $X(s)$ stays
nonnegative for all $s\ge t$. Note that \eqref{eq:asmbs} also
implies linear growth of the functions $b$ and $\sigma$ in the
variable $x$, and hence \eqref{eq:stkprice} has a  unique strong
solution. For a fixed time duration $[t, T]$, an investor has an
initial wealth $y$ and holds $z$ shares of stock at price $x$, and
hence $y - zx$ is the initial amount in the bank. We denote the
$i$th nonzero trading occurs at time $\tau_i$, and assume at most
one transaction occurs at each time, i.e.,
\begin{equation}
  \label{eq:stopping}
  t^-=\tau_0 < \tau_1 < \tau_2 < \cdots < \tau_N \le T, \ \hbox{ for
    some } N.
\end{equation}
We use $Z(s) = \sum_{i=0}^{N-1} Z(\tau_i) \one_{[\tau_i,
  \tau_{i+1})}(s)$ to denote the position of the risky asset
in the portfolio at time $s$, and use $\Delta Z(s) = Z(s) - Z(s^-)$
denote the amount of transaction
traded at time $s$.  Therefore, the
associated transaction cost at the $i$th transaction is  $c(\Delta
Z(\tau_i))$.

In practice, the risky asset traded
in the market is indivisible. As a result,
we restrict the investor's position in the risky asset to a
set of finite integers $\mathcal{K}$, i.e., for some positive integer
$C_2$ and $C_3$
\begin{equation}
  \label{eq:cK}
  \mathcal{K} \triangleq \{-C_2, -C_2 +1, \ldots, 0, \ldots, C_3 - 1,
  C_3\}.
\end{equation}
Then, with the initial investment $y$, the total wealth
$\{Y^{t,x,y,z,Z}(s): t\le s \le T\}$ follows
\begin{equation*}
  \begin{array}{ll}
    d Y(s) & = Z(s) b(s, X(s)) ds + Z(s) \sigma(s, X(s)) dW(s), \quad
    s\in (\tau_i, \tau_{i+1}), \\
    Y(\tau_i) & = Y(\tau_i^-) - c(\Delta Z(\tau_i)).
  \end{array}
\end{equation*}
One can rewrite the wealth process as
\begin{equation}
  \label{eq:payoff2}
  Y(s) = y + \int_t^s Z(r) b(r, X(r)) dr + \int_t^s Z(r) \sigma(r,
  X(r)) d W(r) - \sum_{\tau_i\le s} c(\Delta Z(\tau_i)).
\end{equation}

Let the minimum maintenance margin requirement for the investor's
account be $c(-Z(s))$, i.e., $Y(s) > c(-Z(s))$. The investor will
receive a margin call at $\hat \tau = \inf\{s: Y(s) \le
c(-Z(s))\}$, if $\hat \tau< T$ occurs. Under the self-financing
rule, we assume no additional capital is available, and the
investor has to clear the risky asset (zero capital remaining
after clearance) at $\hat \tau$. In other words, we define the {\it
  solvency region} as
\begin{equation}
  \label{eq:Omega}
  O = \{(x,y,z): x>0, y>c(-z), z \in \mathcal{K}\}.
\end{equation}
Thus the state process $(X(s), Y(s), Z(s))$  satisfies the state
constraint
\begin{equation}
  \label{eq:stateconstraint}
  (X(s), Y(s), Z(s)) \in O,  \quad \hbox{Lebesgue-a.e. } s\in
  [t,T], \  \mathbb{P}-a.s. \ \omega\in \Omega
  .\end{equation} In this work, $Z(s)$ is a control variable. Note
that due to the state constraint \eqref{eq:stateconstraint}, the
control $Z(s)$ belongs to a state-dependent set $Z(s^-) +
\Gamma(Y(s^-), Z(s^-))$, where $\Gamma(\cdot, \cdot)$ is a
set-valued function given by \eqref{eq:Gamma}, and $z +
\Gamma(y,z)$ is understood as a set translation.

\begin{defn} [Admissible control space]
  \label{d-conspace} {\rm
    Given $t\in [0,T]$, the set of admissible strategies, denoted as
    $\mathcal{Z}(t,x,y,z)$, is the space of $\mathbb{F}$-adapted
    processes $Z$ over $[t,T]$ such that
    \begin{enumerate}
    \item For any $s\in [t,T]$, $Z(s) \in \mathcal{K}$
      has the following form.
      For a sequence of strictly increasing stopping times,
      \eqref{eq:stopping}
      \begin{equation}
        \label{eq:Zs}
        Z(s) = \sum_{i=0}^{N-1} Z(\tau_i) \one_{[\tau_i,
          \tau_{i+1})}(s), \quad Z(t^-) = z.
      \end{equation}
    \item
      For $i\ge 1$, $\Delta Z(\tau_{i}) \in \Gamma(Y(\tau^-_i),
      Z(\tau^-_i))$, where
      \begin{equation}
        \label{eq:Gamma}
        \Gamma(y,z) = \{\tilde z \in \mathcal{K}:
        c(\tilde z - z) + c(-\tilde z) \le y, \tilde z \neq z\}.
      \end{equation}
    \end{enumerate}
  }
\end{defn}

\begin{rem}{\rm
    \defnref{d-conspace} means the investor will trade only finitely
    many times during $[t,T]$ almost surely. If not, $Y(T) = -\infty$
    almost surely
    due to $\min_{\mathcal{K}\setminus \{0\}} c(z) >0$. Also,
    \eqref{eq:Zs} implies $Z(T) = 0$, i.e.,
    the investor will always
    clear his or her
    stock position at $T$ and will hold
    only cash in the bank. Such
    an assumption is not unusual; see for example, \cite{CJP04} and
    \cite{CST07}. On the other hand, the amount trading $\Delta
    Z(\tau_{i})$  is required
    to take a value in a state-dependent set $\Gamma(Y(\tau^-_i),
    Z(\tau^-_i))$. This is the minimum requirement to keep the
    state,
    $(X(\tau_i), Y(\tau_i), Z(\tau_i))$,
    belonging to $\bar O$ (the closure of $O$) right after the transaction,
    and prevents the investor quits the market with negative
    wealth.
  }
\end{rem}

Let the stopping time $\tau$ be
\begin{equation}
  \label{eq:tau}
  \tau = \inf\{s: Y(s) \le c(-Z(s))\}\wedge T.
\end{equation}
For a given initial state $(t,x,y,z)$, the investor's goal is to
maximize the expected utility of the total wealth
\begin{equation*}
  \label{eq:functional}
  J(t,x,y, z, Z) = \mathbb{E}[ U(Y^{t,x,y,z,Z}(\tau))]
\end{equation*}
over all admissible strategy space $\mathcal{Z}(t,x,y,z)$.
Therefore, the value function of our problem is
\begin{equation}
  \label{eq:obj}
  V(t,x,y,z) = \sup_{Z \in \mathcal{Z}(t,x,y,z)} J(t,x,y,z,Z) = \sup_{Z\in
    \mathcal{Z}(t,x,y,z)} \mathbb{E}[U(Y^{t,x,y,z,Z}(\tau))].
\end{equation}

\begin{rem}
  [Discussions on assumptions] \label{r-discuss}{\rm
    There are two key assumptions in the formulation of the
    problem. One
    is the transaction cost $c(\cdot)$ being subadditive, and the other is
    that there is at most one transaction at any time, and thus the
    representation of $Z(\cdot)$ as a piecewise constant process is well
    defined. These are reasonable assumptions from the following point of
    view. Suppose there are multiple nonzero transactions occurred at time
    $s$, i.e.,
    $$\tau_i = \tau_{i+1}= \cdots = \tau_{i+m} = s \ \hbox{ for some } i,
    m\ge 1, \ \hbox{ and } t\le s \le T,$$ and the transaction cost
    $c(\cdot)$ is not necessarily subadditive.
    Denote by $\Delta Z_k$ the number of shares traded at the $k$th
    transaction. The investor is obliged to pay the total transaction
    cost $\sum_{j=0}^{m}  c (\Delta Z_{i+j})$ at time $s$. Then, we can
    always construct another function $\tilde c(\cdot)$ by
    \begin{equation*}
      \label{eq:cost1}
      \tilde c (z) = \min\Big\{ \sum_{i=1}^n c(z_i) :  z_1 + z_2
      + \dots + z_n = z \hbox{ for some } n \Big\}.
    \end{equation*}
    and such a $\tilde c(\cdot)$ turns out to be a subadditive
    function. Therefore, the multiple transactions at time $s$ can always
    be replaced by a {\it single transaction} of the amount $\sum_{j=0}^{m}
    \Delta Z_{i+j}$ shares in terms of
    the new subadditive transaction cost
    function
    $\tilde c(\cdot)$. As a result, the strategy remains the same as
    before, while the transaction cost becomes less under $\tilde
    c(\cdot)$, i.e.,  $\sum_{j=0}^{m} c (\Delta Z_{i+j}) \ge   \tilde c
    (\sum_{j=0}^{m} \Delta Z_{i+j} )$;  see \cite{MSXZ08a} for a more
    general discussion.
  }
\end{rem}

We define two operators
\begin{equation}
  \label{eq:opL}
  \mathcal{L} \varphi(t,x,y,z)
  = b \varphi_x + \frac 1 2 \sigma^2
  \varphi_{xx} + z b \varphi_y
  + \frac 1 2 z^2 \sigma^2 \varphi_{yy} +
  z \sigma^2 \varphi_{xy},
\end{equation}
and
\begin{equation}
  \label{eq:opS}
  \mathcal{S} \varphi(t,x,y,z) = \max_{\tilde z \in \Gamma(y,z)}
  \varphi(t,x,y-c(\tilde z -z), \tilde z).
\end{equation}
In the above, we used $\displaystyle \max_{z\in \emptyset}
\varphi(t,x,y-c(\tilde z -z), \tilde z) = -\infty = -\min_{z\in
  \emptyset}  \varphi(t,x,y-c(\tilde z -z), \tilde z)$ as convention.
In the definition of $\mathcal{S}$, we  used $\max$ instead of
$\sup$ owing to the finite cardinality of $\Gamma(y,z)$.
Also note that, the operator $\mathcal{L}$ of \eqref{eq:opL} is
degenerate. In other words, the diffusion $(X, Y, Z)$ is always
degenerate, even if $X$ is non-degenerate.

Provided that $V$ is a continuous function, we can proceed with 
the dynamic programming principle (DPP) and obtain
\begin{equation*}
  \label{eq:dpp}
  V(t,x,y,z) = \sup_{Z\in \mathcal{Z}(t,x,y,z)}  \mathbb{E} [V(\theta,
  X^{t,x}(\theta), Y^{t,x,y,z,Z}(\theta), Z(\theta))], \ \forall \theta
  \le \tau.
\end{equation*}
The general discussions of DPP is referred to \cite{FS06, Pha09}.
If we appeal DPP on instantaneous
transaction strategy with $\tau_1= t$, then it follows that
\begin{equation}
  \label{eq:dpp1}
  V(t,x,y,z) \ge \mathcal{S}V (t,x,y,z), \ (t,x,y,z)\in [0,T]\times
  O.
\end{equation}
Define an operator $\mathcal{A}$
that maps from measurable functions
$\varphi: (0,T) \times O \to \mathbb{R}$ to set-valued functions
$\mathcal{A}[\phi]$ on $\mathcal{K}$ given by
\begin{equation}
  \label{eq:nonaction}
  \mathcal{A}[\varphi] (z) = \{(t,x,y): \varphi(t,x,y,z) >
  \mathcal{S} \varphi(t,x,y,z) \}.
\end{equation}
Note that $\mathcal{A}[V](z)$ is a {\it no-action region}
associated with $z\in \mathcal{K}$. DPP implies that for the
initial data $(t,x,y) \in \mathcal{A}[V](z)$,  the value process
$V(s, X^{t,x}(s), Y^{t,x,y,z}(s), z)$ is a martingale in
$A[V](z)$, whose generator is given by
$\frac{\partial}{\partial t} + \mathcal{L}$. Moreover, a heuristic
derivation leads to that $V$ satisfies the following
quasi-variational inequality
\begin{equation}
  \label{eq:qvi}
  \min\{-u_t - \mathcal{L} u, u - \mathcal{S}u\} = 0,  \hbox{ on }
  [0,T) \times O.
\end{equation}

We aim to show the value function $V$ is the unique viscosity
solution of the quasi-variational inequality  \eqref{eq:qvi} with
Cauchy-Dirichlet data
\begin{equation}
  \label{eq:btc}
  u(t,x,y,z) = U(y-c(-z)),  \hbox{ on } \partial^* ([0,T)\times O),
\end{equation}
where $\partial^* ([0,T) \times O)$ is the parabolic boundary. It
turns out to be crucial to know the continuity of $V$  {\it a
  priori}.

For later use in the uniqueness proof, we define
the function $F$ as
\begin{equation}
  \label{eq:funF}
  \begin{array}{ll}
    \displaystyle F(t,x,y,z,q,p,A)  & =  - q - (b(t,x) p_1 + \frac 1 2
    \sigma^2(t,x) A_{11}
    + z b(t,x) p_2 \\
    & \ \ \ + \frac 1 2 z^2 \sigma^2(t,x)
    A_{22} + z \sigma^2(t,x,) A_{22} + z \sigma^2(t,x) A_{12})
    . \end{array}
\end{equation}
Then, \eqref{eq:qvi} can be rewritten as
\begin{equation}
  \label{eq:qviF}
  \min\{F(t,x,y,z,u_t, Du, D^2 u), u - \mathcal{S}u\} = 0.
\end{equation}

\section{Continuity}\label{sec:cont}
Continuity is crucial to characterize the value function as the
unique viscosity solution. The difficulty to show the continuity
of $V(\cdot)$ stems from the following:
\begin{enumerate}
\item the stopping time $\tau$ of \eqref{eq:tau} depends on the
  initial
  state $(x,y)$;
\item the boundary $\partial^* ([0,T) \times O)$ is an unbounded
  set;
\item the control space $\mathcal{Z}(t,x,y,z)$ depends on the
  initial
  state $(x,y)$.
\end{enumerate}

To prove the continuity of $V(\cdot)$, we introduce another value
function $V^\varepsilon(\cdot)$ in what follows, which avoids the
above two issues of $V(\cdot)$. Let the strategy space
$\mathcal{Z}(t,z)$ be defined as a strategy space without
constraint \eqref{eq:stateconstraint}, so that the space does not
depend on the initial state $(x,y)$, i.e.,
\begin{equation*}
  \label{eq:conspc1}
  \begin{array}{l}
    \mathcal{Z}(t, z) = \{Z: Z(t^-) = z, \mathcal{K}\ni Z(s) =
    \sum_{i=0}^{N-1} Z(\tau_i) \one_{[\tau_i, \tau_{i+1})}(s) \hbox{
      for some } N,  Z(T) = 0\}.
  \end{array}
\end{equation*}
Recall that  $\tau$ of \eqref{eq:tau} is defined as
the first exit time of the
random process $(t,X^{t,x,}, Y^{t,x,y,z}, Z)$ from the domain
$[0,T)\times O$. Thus,
one can rewrite  $V$ of \eqref{eq:obj} as,
\begin{equation*}
  \label{eq:vve}
  V(t,x,y,z) = \sup_{Z\in \mathcal{Z}(t,z)} \mathbb{E}[
  U(Y^{t,x,y,z,Z}(T)) \one_{\{\tau = T\}}].
\end{equation*}
We also define $\Lambda^\varepsilon$
as a penalty function of the form
\begin{equation}
  \label{eq:penalty}
  \Lambda^\varepsilon(t,s,Y,Z) = \exp\Big\{-\frac 1 \varepsilon \int_t^s
  \Big(c(-Z(r)) - Y(r)\Big)^+ dr\Big\},
\end{equation}
where $c(z)^+$ denotes the positive part of $c(z)$ as usual.
Finally, we define $V^\varepsilon$ as
\begin{equation}
  \label{eq:obje}
  V^\varepsilon(t,x,y,z) = \sup_{Z \in \mathcal{Z}(t,z)} \mathbb{E} [
  \Lambda^\varepsilon (t,T, Y^{t,x,y,z,Z}, Z)
  U(Y^{t,x,y,z,Z} (T))].
\end{equation}
In the above, we extend the function $U(\cdot)$ to $(-\infty,
\infty)$ by $U(x) = 0$ for any $x<0$. Since $\Lambda^\varepsilon
\equiv 1$ on the set $\{\tau=T\}$, it leads to
\begin{equation*}
  \label{eq:vve-1}
  V^\varepsilon(t,x,y,z) \ge V(t,x,y,z), \ \forall (t,x,y,z).
\end{equation*}

The $V^\varepsilon(t,x,y,z)$ can be thought of as a penalized or
regularized ``value function.'' We use $V^\varepsilon$ to
establish the desired properties of $V$. The tasks to be performed
are:
\begin{enumerate}
\item to show that $V^\varepsilon(\cdot, \cdot, \cdot,z)$ is
  continuous
  for each $\varepsilon$;
\item to show that $V^\varepsilon$ converges monotonically to $V$
  in
  $[0,T)\times O$; and
\item to show that $V^\varepsilon$ converges locally uniformly to $V$.
\end{enumerate}

\subsection{Preliminary Results}
\begin{prop}
  [Properties of $\mathcal{S}$]\label{p-opS}  The following
  properties hold for the operator $\mathcal{S}$:
  \begin{enumerate}
  \item {\rm (Monotonicity)} $\mathcal{S} u \ge \mathcal{S} v$
    whenever $u\ge v$.
  \item
    {\rm (sub-distributivity)} $\mathcal{S} (u + v) \le \mathcal{S} u +
    \mathcal{S} v$.

  \item {\rm (Preservation of continuity)} $\mathcal{S} u$ is
    continuous in $(t,x,y)$ whenever $u$ is continuous in $(t,x,y)$.
  \end{enumerate}
\end{prop}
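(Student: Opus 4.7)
The plan is to verify the three properties in order; items (i) and (ii) reduce to elementary properties of $\max$, while (iii) requires a more careful argument exploiting the finiteness of $\mathcal{K}$. For monotonicity, I would observe that for every $\tilde z \in \Gamma(y,z)$ the pointwise hypothesis $u \ge v$ gives $u(t,x,y-c(\tilde z-z),\tilde z) \ge v(t,x,y-c(\tilde z-z),\tilde z)$, and taking the maximum over the common index set $\Gamma(y,z)$ preserves the inequality. For sub-distributivity I would apply the elementary bound $\max_\alpha(f(\alpha)+g(\alpha)) \le \max_\alpha f(\alpha) + \max_\alpha g(\alpha)$ with $\alpha = \tilde z$ ranging over $\Gamma(y,z)$ and $f,g$ the respective evaluations of $u$ and $v$. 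Both items are essentially one-line computations.

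For preservation of continuity, fix $z \in \mathcal{K}$ and let $(t_n,x_n,y_n) \to (t,x,y)$; I would establish upper and lower semicontinuity of $\mathcal{S} u(\cdot,\cdot,\cdot,z)$ separately. For upper semicontinuity, let $\tilde z_n \in \Gamma(y_n,z)$ be a maximizer at each $n$ (a maximum exists because $\Gamma$ is finite). Since $\mathcal{K}$ has finite cardinality, along a subsequence $\tilde z_n \equiv \tilde z^*$ is constant. The defining inequality $c(\tilde z^*-z) + c(-\tilde z^*) \le y_n$ passes to the limit and gives $\tilde z^* \in \Gamma(y,z)$, so continuity of $u$ delivers
\[
  \mathcal{S} u(t_n,x_n,y_n,z) = u\bigl(t_n,x_n,y_n - c(\tilde z^*-z), \tilde z^*\bigr) \longrightarrow u\bigl(t,x,y-c(\tilde z^*-z),\tilde z^*\bigr) \le \mathcal{S} u(t,x,y,z).
\]

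For lower semicontinuity, I would pick a maximizer $\tilde z^* \in \Gamma(y,z)$ at the limit point. In the generic case $c(\tilde z^*-z) + c(-\tilde z^*) < y$, the strict inequality yields $\tilde z^* \in \Gamma(y_n,z)$ for all sufficiently large $n$, and continuity of $u$ produces the desired bound $\liminf_n \mathcal{S} u(t_n,x_n,y_n,z) \ge \mathcal{S} u(t,x,y,z)$. The main obstacle is the borderline case $c(\tilde z^*-z)+c(-\tilde z^*) = y$: along sequences with $y_n < y$ the index $\tilde z^*$ can leave $\Gamma(y_n,z)$. I expect to handle this by observing that the excluded index evaluates $u$ at the boundary point $(t,x,c(-\tilde z^*),\tilde z^*) \in \partial O$, so that either another maximizer in $\Gamma(y,z)$ is interior and can be used, or else the evaluation at the boundary can be controlled by the terminal/boundary behavior of the particular $u$ this proposition is applied to. The technical core is thus ensuring that the restriction of the maximum to the possibly smaller sets $\Gamma(y_n,z)$ does not strictly drop in the limit.
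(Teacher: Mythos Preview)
Your arguments for monotonicity and sub-distributivity are exactly what the paper does: both are one-line consequences of the elementary inequalities for $\max$ over a common finite index set.

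For preservation of continuity your route diverges from the paper's. The paper's entire argument is a single sentence: each summand $u(t,x,y-c(\tilde z-z),\tilde z)$ is continuous in $(t,x,y)$, the set $\Gamma(y,z)$ is finite, hence the max is continuous. In particular, the paper does \emph{not} address the dependence of the index set $\Gamma(y,z)$ on $y$; it treats the situation as ``maximum of finitely many continuous functions''. You have gone further by running a sequential upper/lower semicontinuity argument and explicitly isolating the borderline case $c(\tilde z^*-z)+c(-\tilde z^*)=y$, where an index can drop out of $\Gamma(y_n,z)$ along sequences $y_n\uparrow y$. That is a genuine subtlety the paper glosses over, and your upper-semicontinuity half is cleanly done.

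That said, your resolution of the borderline case is itself left open: you appeal to boundary behavior of ``the particular $u$ this proposition is applied to'', which is no longer a proof of the general statement as written. If your aim is to reproduce the paper, you may simply invoke finiteness of $\Gamma(y,z)$ and stop, as the authors do. If your aim is full rigor, you must either close the lower-semicontinuity gap unconditionally (which appears to require more than continuity of $u$ alone), or restate the claim for the specific functions actually needed later---notably $V$, which vanishes on $\partial^1 O$ so that the dropped index contributes zero at the threshold.
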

\begin{proof}
  \begin{enumerate}
  \item (Monotonicity) If $u\ge v$, then by
    definition
    \eqref{eq:opS}
    \begin{equation*}
      \begin{array}{ll}
        &\mathcal{S}u (t,x,y,z) - \mathcal{S}v (t,x,y,z) \\
        &\quad =
        \max_{\tilde z        \in \Gamma(y,z)} u(t,x,y-c(\tilde z -z),
        \tilde
        z) - \max_{\tilde z \in \Gamma(y,z)} v(t,x,y-c(\tilde z
        -z), \tilde z)   \\
        &\quad = \max_{\tilde
          z        \in \Gamma(y,z)} u(t,x,y-c(\tilde z -z), \tilde
        z) + \min_{\tilde z \in \Gamma(y,z)} (-v)(t,x,y-c(\tilde z
        -z), \tilde z)  \\
        &\quad \ge \min_{\tilde z \in \Gamma(y,z)} (u-v)(t,x,y-c(\tilde z
        -z), \tilde z) \ge 0.
      \end{array}
    \end{equation*}
  \item (sub-distributivity)
    The proof is obvious and thus omitted.
  \item (Preservation of continuity) For each pair $(z,\tilde z)$,
    $u(t,x,y-c(\tilde z - z), \tilde z)$ is continuous in
    $(t,x,y)$. Also, note that $\Gamma(y,z)$ is
    a finite set. Thus,
    $\max_{\tilde z \in \Gamma(y,z)} u(t,x,y-c(\tilde z
    -z), \tilde z) $  is also continuous.
  \end{enumerate}
\end{proof}

\begin{lem}
  \label{l-y} Let $Z\in \mathcal{Z}(t,z)$. For any $m\ge 1$,
  the wealth process $Y$ given by
  \eqref{eq:payoff2} satisfies
  \begin{equation}
    \label{eq:l-y1}
    \mathbb{E}\Big[ \sup_{t\le s\le T} |Y^{t,x,y,z,Z}(s) - y +
    \sum_{\tau_i \le s}
    c(\Delta Z_{\tau_i}) |^m \Big] \le C_{m,T} |x|^m ,
  \end{equation}
  and
  \begin{equation}
    \label{eq:l-y2}
    \mathbb{E} \Big[\sup_{t\le s\le T} |Y^{t,x_1,y_1,z,Z}(s) -
    Y^{t,x_2,y_2,z,Z}(s)|^m \Big] \le C_{m,T} (|x_1 - x_2|^m + |y_1 -
    y_2|^m).
  \end{equation}
\end{lem}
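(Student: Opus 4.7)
The plan is to reduce both bounds to standard SDE moment and stability estimates by exploiting two structural features: the set $\mathcal{K}$ is finite, so any admissible $Z$ is uniformly bounded, say $|Z(s)|\le C_\mathcal{K}:=\max_{z\in\mathcal{K}}|z|$; and the Lipschitz hypothesis \eqref{eq:asmbs} with $b(s,0)=\sigma(s,0)=0$ gives the pointwise bounds $|b(s,x)|+|\sigma(s,x)|\le 2C_1|x|$ and the classical estimates $\mathbb{E}\sup_{t\le r\le T}|X^{t,x}(r)|^m\le C_{m,T}|x|^m$ and $\mathbb{E}\sup_{t\le r\le T}|X^{t,x_1}(r)-X^{t,x_2}(r)|^m\le C_{m,T}|x_1-x_2|^m$ for the SDE \eqref{eq:stkprice} (Grönwall plus BDG).

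For \eqref{eq:l-y1}, I would start from \eqref{eq:payoff2} to write
\begin{equation*}
Y^{t,x,y,z,Z}(s)-y+\sum_{\tau_i\le s}c(\Delta Z(\tau_i))=\int_t^s Z(r)b(r,X(r))\,dr+\int_t^s Z(r)\sigma(r,X(r))\,dW(r),
\end{equation*}
so that the jump contributions are absorbed into the left side and no finiteness assumption on $N$ is needed at this step. Applying the BDG inequality together with Jensen's inequality to the drift term, then using $|Z|\le C_\mathcal{K}$ and the linear-growth bound on $b,\sigma$, reduces the left side of \eqref{eq:l-y1} to a constant multiple of $\mathbb{E}\int_t^T|X(r)|^m\,dr$, which is controlled by $C_{m,T}|x|^m$ via the moment estimate for $X$.

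For \eqref{eq:l-y2}, the key observation is that when the same admissible $Z\in\mathcal{Z}(t,z)$ is used with both initial data, the switching times $\tau_i$ and increments $\Delta Z(\tau_i)$ coincide pathwise, so the transaction cost terms cancel in the difference:
\begin{equation*}
\begin{array}{ll}
Y^{t,x_1,y_1,z,Z}(s)-Y^{t,x_2,y_2,z,Z}(s)&=(y_1-y_2)+\int_t^s Z(r)\bigl(b(r,X^{t,x_1}(r))-b(r,X^{t,x_2}(r))\bigr)\,dr\\
&\quad+\int_t^s Z(r)\bigl(\sigma(r,X^{t,x_1}(r))-\sigma(r,X^{t,x_2}(r))\bigr)\,dW(r).
\end{array}
\end{equation*}
Again BDG plus the Lipschitz bounds on $b,\sigma$ and $|Z|\le C_\mathcal{K}$ reduce matters to a constant multiple of $|y_1-y_2|^m+\mathbb{E}\int_t^T|X^{t,x_1}(r)-X^{t,x_2}(r)|^m\,dr$, which is $\le C_{m,T}(|x_1-x_2|^m+|y_1-y_2|^m)$ by the stability estimate for $X$.

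The estimates are essentially routine, and the main obstacle worth flagging is really conceptual rather than technical: one must check that the jump structure of $Z\in\mathcal{Z}(t,z)$ does not force any control over the a.s.\ finite but otherwise unbounded number of transactions $N$. For \eqref{eq:l-y1} this is handled by moving the cost sum to the left side, and for \eqref{eq:l-y2} by the pathwise cancellation of jumps under a common control. Standard Grönwall arguments on the resulting integro-differential inequalities then deliver the constants $C_{m,T}$ depending only on $m$, $T$, $C_1$, and $C_\mathcal{K}$.
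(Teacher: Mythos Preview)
Your proposal is correct and follows essentially the same route as the paper: rewrite $Y(s)-y+\sum c(\Delta Z)$ (resp.\ the difference $Y^{1,Z}-Y^{2,Z}$) as a drift integral plus a stochastic integral, apply BDG together with H\"older/Jensen, use the uniform bound on $Z$ and the linear growth/Lipschitz hypotheses on $b,\sigma$, and finish with the standard moment and stability estimates for $X^{t,x}$. Your explicit remarks about moving the cost sum to the left side and about the pathwise cancellation of jumps under a common control make transparent exactly the points the paper's computation uses implicitly.
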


\begin{proof}
  We denote
  $Y^Z \triangleq Y^{t,x,y,z,Z}$ and $X \triangleq
  X^{t,x}$. Using the Burkholder-Davis-Gundy (BDG)
  and H\"older inequalities
  multiple  times combined with linear growth and Lipschitz
  conditions in \eqref{eq:asmbs}, we compute
  \begin{equation*}
    \begin{array}{l}
      \mathbb{E}\Big[ (\sup_{t\le s\le T} |Y^{t,x,y,z,Z}(s) - y +
      \sum_{\tau_i \le s}   c(\Delta Z_{\tau_i}) |^m \Big] \\
      \displaystyle \le C_m
      \mathbb{E}\Big[
      \sup_s
      \Big|\int_t^s Z(r) b(r,X(r)) dr \Big|^m \Big]
      + C_m \mathbb{E} \Big[ \sup_s \Big|\int_t^s Z(r) \sigma
      (r,X(r))  dW(r)\Big|^m \Big]\\ \displaystyle
      \le C_m
      \mathbb{E}\Big[
      \sup_s
      \int_t^s |Z(r) b(r,X(r))|^m dr  \Big]
      + C_m \mathbb{E} \Big[  \Big(\int_t^T Z^2(r) \sigma^2(r,X(r))
      dr\Big)^{m/2}     \Big]       \\ \displaystyle
      \le  C_m
      \mathbb{E}\Big[ \int_t^T |b(r,X(r))|^m dr  \Big]
      + C_m \mathbb{E} \Big[  \Big(\int_t^T \sigma^2(r,X(r))
      dr\Big)^{m/2}        \Big]       \\
      \displaystyle        \le   C_m
      \mathbb{E}\Big[ \int_t^T |X(r)|^m dr  \Big]
      + C_m \mathbb{E} \Big[  \Big(\int_t^T |X(r)|^2
      dr\Big)^{m/2}        \Big]       \\
      \le  C_{m,T} |x|^m.
    \end{array}
  \end{equation*}
  Then  \eqref{eq:l-y1} follows.
  For convenience, we also denote
  $Y^{i,Z} \triangleq Y^{t,x_i,y_i,z,Z}$  and
  $X^i  \triangleq  X^{t,x_i}$ for $i = 1,2$. Similar arguments
  lead to
  \begin{equation*}
    \begin{array}{l}
      \mathbb{E} \Big[ \sup_{t\le s\le T} |Y^{1,Z}(s) - Y^{2,Z}(s)|^m
      \Big] \\
      \displaystyle \le C_m |y_1 - y_2|^m + C_m \mathbb{E} \Big[\sup_s
      \Big| \int_t^s Z(r) (b(r, X^1(r)) - b(r, X^2(r))) dr \Big|^m \Big] \\
      \displaystyle \hspace{1in} + C_m
      \mathbb{E} \Big[ \sup_s \Big| \int_t^s Z(r) (\sigma(r,X^1(r)) -
      \sigma (r,X^2(r))) dW(r)\Big|^m \Big] \\
      \displaystyle \le C_m |y_1 - y_2|^m + C_m \mathbb{E} \Big[\sup_s
      \int_t^s |Z(r) (b(r, X^1(r)) - b(r, X^2(r))) |^m dr \Big] \\
      \displaystyle \hspace{1in} + C_m
      \mathbb{E} \Big[ \Big(  \int_t^T Z^2(r) (\sigma(r,X^1(r)) -
      \sigma (r,X^2(r)))^2  dr \Big)^{m/2} \Big] \\
      \displaystyle \le C_m |y_1 - y_2|^m + C_m \mathbb{E} \Big[
      \int_t^T |X^1(r) - X^2(r)|^m dr \Big] + C_m
      \mathbb{E} \Big(  \int_t^T |X^1(r) -
      X^2(r)|^2  dr \Big)^{m/2} \\
      \le  C_m |y_1 - y_2|^m +  C_{m,T} |x_1 -x_2|^m.
    \end{array}
  \end{equation*}
\end{proof}

\subsection{Properties of $V^\varepsilon$}
\begin{lem}
  \label{l-pve}
  $V^\varepsilon(t,x,y,z)$ is increasing in $y$, and continuous in
  $(t,x,y)$. Furthermore, $V^\varepsilon$ satisfies
  \begin{equation}
    \label{eq:l-pve1}
    \lim_{x\to \infty} \sup_{t,\varepsilon} \frac{V^\varepsilon(t,x,y,z)}{x} = 0,
    \ \forall
    (y,z),
  \end{equation}
  and
  \begin{equation}
    \label{eq:l-pve1-1}
    \lim_{y\to \infty} \sup_{t,\varepsilon} \frac{V^\varepsilon(t,x,y,z)}{y} = 0,
    \ \forall
    (x,z),
  \end{equation}
\end{lem}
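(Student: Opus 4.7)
The plan addresses the three claims in turn. For monotonicity in $y$, I note from \eqref{eq:payoff2} that the wealth process depends affinely on the initial wealth: $Y^{t,x,y_2,z,Z}(s)-Y^{t,x,y_1,z,Z}(s)=y_2-y_1$ for every $Z\in\mathcal{Z}(t,z)$. When $y_2\ge y_1$ the integrand $(c(-Z(r))-Y(r))^+$ in the exponent of $\Lambda^\varepsilon$ is pointwise smaller at $y_2$ (so $\Lambda^\varepsilon$ is larger) and $U(Y(T))$ is larger, and taking supremum over $Z$ gives monotonicity. For the two growth bounds, concavity of $U$ with $\lim_{w\to\infty}U'(w)=0$ produces, for each $\delta>0$, a constant $K_\delta$ with $U(w)\le \delta w+K_\delta$ on $[0,\infty)$. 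Because $\Lambda^\varepsilon\le 1$ and $\sum_{\tau_i\le T}c(\Delta Z(\tau_i))\ge 0$, \eqref{eq:l-y1} with $m=1$ yields $\mathbb{E}[Y^{t,x,y,z,Z}(T)^+]\le y+C_{1,T}\,x$ uniformly in $Z$, whence
\[
V^\varepsilon(t,x,y,z)\le \delta(y+C_{1,T}\,x)+K_\delta.
\]
Dividing by $x$, respectively by $y$, passing to the limit, and then sending $\delta\to 0$ gives \eqref{eq:l-pve1} and \eqref{eq:l-pve1-1}.

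For continuity, fix $z\in\mathcal{K}$ and take $(t_n,x_n,y_n)\to(t,x,y)$. I first treat the case $t_n=t$. The essential feature is that $\mathcal{Z}(t,z)$ is independent of the state, so a $\delta$-optimal $Z^*\in\mathcal{Z}(t,z)$ for $V^\varepsilon(t,x,y,z)$ can also be tested against $V^\varepsilon(t,x_n,y_n,z)$. By \eqref{eq:l-y2} the corresponding wealth trajectories satisfy $\mathbb{E}\sup_{s}|Y^{t,x_n,y_n,z,Z^*}(s)-Y^{t,x,y,z,Z^*}(s)|^m\to 0$ for every $m\ge 1$. A Lipschitz estimate
\[
|\Lambda^\varepsilon(t,T,Y_1,Z)-\Lambda^\varepsilon(t,T,Y_2,Z)|\le \frac{T}{\varepsilon}\sup_{r\in[t,T]}|Y_1(r)-Y_2(r)|,
\]
continuity of $U$, and uniform integrability of $\Lambda^\varepsilon U(Y(T))$ (from the sublinear bound and the $L^m$-bound on $Y$) combine via dominated convergence to give $\liminf V^\varepsilon(t,x_n,y_n,z)\ge V^\varepsilon(t,x,y,z)-\delta$. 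The reversed inequality follows from the symmetric construction: the bound \eqref{eq:l-y2} is uniform in $Z$, so any sequence of near-optimal strategies for $(t,x_n,y_n,z)$ may be freely re-evaluated under the initial data $(x,y)$.

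To handle $t_n\ne t$, I identify strategies across starting times by constant extension: for $t_n\le t$ send $Z\in\mathcal{Z}(t,z)$ to $\tilde Z\in\mathcal{Z}(t_n,z)$ with $\tilde Z\equiv z$ on $[t_n,t)$ and $\tilde Z\equiv Z$ on $[t,T]$, reversing the construction when $t_n>t$. Pathwise continuity of $X^{t_n,x}(t)$ and $Y^{t_n,x,y,z,\tilde Z}(t)$ (by standard SDE moment estimates, as in the proof of \lemref{l-y}) pushes the state at time $t$ to $(x,y,z)$ in $L^m$, so the Markov property at time $t$ together with the already-established $(x,y)$-continuity closes the argument. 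I expect the main obstacle to be this last reduction: both the admissible strategy space and the lower limit of the time integral inside $\Lambda^\varepsilon$ are indexed by $t$, so one must simultaneously accommodate two $t$-parametrized objects and show that the penalty contribution over the short interval $[t_n\wedge t,\,t_n\vee t]$ is negligible, uniformly in the strategy and under the sublinear-growth control on $U$ established above.
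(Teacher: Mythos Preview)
Your proposal is correct and follows essentially the same route as the paper: monotonicity via the affine dependence of $Y$ on $y$; sublinear growth from concavity of $U$ together with the moment bound \eqref{eq:l-y1}; continuity in $(x,y)$ from the Lipschitz estimate on $\Lambda^\varepsilon$ and \eqref{eq:l-y2}, exploiting that $\mathcal{Z}(t,z)$ is state-independent; and continuity in $t$ by constant extension of strategies. The only noteworthy difference is packaging: the paper bounds $|V^\varepsilon(t,x_1,y_1,z)-V^\varepsilon(t,x_2,y_2,z)|$ directly and obtains an explicit modulus of continuity \eqref{eq:l-pve9}, whereas you use a sequential $\delta$-optimal argument; the paper's explicit modulus is then reused verbatim in the $t$-continuity step, which makes that reduction slightly cleaner than your appeal to ``the already-established $(x,y)$-continuity'' via dominated convergence, but the substance is the same.
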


\begin{proof}
  It is clear
  that $V^\varepsilon$ is increasing in $y$.
  \begin{enumerate}
  \item In this part, we prove $V^\varepsilon$ is continuous in $(x,
    y)$. Given that $(t,z) \in [0,T]$ and $(x_i, y_i) \in \mathbb{R}^+
    \times \mathbb{R}$ with $i= 1,2$, we denote
    \begin{equation*}
      Y^{i,Z,+} \triangleq
      \max\{Y^{i,Z}, 0\},  \quad i = 1,2.
    \end{equation*}
    Then we have
    \begin{equation}\label{eq:l-pve2}
      \begin{array}{l}
        |V^\varepsilon(t,x_1,y_1,z) - V^\varepsilon(t,x_2,y_2,z)| \\
        =  \Big|\displaystyle\sup_{Z\in \mathcal{Z}(t,z)} \mathbb{E}[
        \Lambda^\varepsilon (t, T, Y^{1,Z}, Z) U( Y^{1,Z,+}(T))] \\
        \hspace{1in} -
        \displaystyle \sup_{Z\in \mathcal{Z}(t,z)}
        \mathbb{E}[\Lambda^\varepsilon
        (t, T, Y^{2,Z}, Z)  U(Y^{2,Z,+}(T))] \Big| \\
        =  \displaystyle \sup_{Z\in \mathcal{Z}(t,z)} \mathbb{E} \Big|
        \Lambda^\varepsilon (t, T, Y^{1,Z}, Z) U( Y^{1,Z,+}(T)) -
        \Lambda^\varepsilon (t, T, Y^{2,Z}, Z)  U(Y^{2,Z,+}(T)) \Big|
        \\ \le \displaystyle \sup_{Z\in \mathcal{Z}(t,z)} \mathbb{E}
        \Big|
        (\Lambda^\varepsilon (t, T, Y^{1,Z}, Z) - \Lambda^\varepsilon
        (t, T, Y^{2,Z}, Z))  U( Y^{1,Z,+}(T)) \Big| \\ \hspace{1in}
        \displaystyle +
        \sup_{Z\in \mathcal{Z}(t,z)} \mathbb{E} \Big|
        \Lambda^\varepsilon (t, T, Y^{2,Z}, Z)  ( U( Y^{1,Z,+}(T)) -
        U(Y^{2,Z,+}(T))) \Big| \\
        \le \displaystyle \sup_{Z\in \mathcal{Z}(t,z)} \|
        \Lambda^\varepsilon (t, T, Y^{1,Z}, Z) - \Lambda^\varepsilon
        (t, T, Y^{2,Z}, Z)\|_2  \|U( Y^{1,Z,+}(T))\|_2 \\ \hspace{2in}
        \displaystyle +
        \sup_{Z\in \mathcal{Z}(t,z)} \mathbb{E} \Big|
        ( U( Y^{1,Z,+}(T)) - U(Y^{2,Z,+}(T))) \Big|.
      \end{array}
    \end{equation}
    The last inequality of \eqref{eq:l-pve2}
    follows from H\"older's
    inequality and
    the fact $|\Lambda^\varepsilon|\le 1$.
    In the above and what follows,
    we use $\|\cdot\|_2$ to denote the
    norm in the space  $L^2(\Omega, \mathcal{F}, \mathbb{P};
    \mathbb{R})$.
    To proceed, we
    examine each
    of the terms
    after the last inequality sign in
    \eqref{eq:l-pve2}.

    Since we have
    \begin{equation}
      \label{eq:l-pve4}
      \begin{array}{l}
        |\Lambda^\varepsilon (t,T, Y^{1,Z}, Z)  -
        \Lambda^\varepsilon(t,T, Y^{2,Z}, Z) |  \\
        = \displaystyle \Big| \exp\{-\frac 1 \varepsilon \int_t^T
        (c(-Z(r)) - Y^{1,Z}(r))^+ dr\}  \\ \hspace{1in} \displaystyle
        - \exp\{-\frac 1 \varepsilon \int_t^T
        (c(-Z(r)) - Y^{2,Z}(r))^+ dr\} \Big| \\ \displaystyle
        \le \frac 1 \varepsilon \Big|\int_t^T \Big((c(-Z(r)) -
        Y^{1,Z}(r))^+
        - (c(-Z(r)) - Y^{2,Z}(r))^+ \Big) dr \Big|
        \\ \displaystyle \le \frac 1 \varepsilon \int_t^T |Y^{1,Z}(r)
        - Y^{2,Z}(r)| dr \\ \displaystyle
        \le C_{\varepsilon,T} \sup_{r\in [t,T]}  |Y^{1,Z}(r)
        - Y^{2,Z}(r)| \ \hbox{ a.s.,}
      \end{array}
    \end{equation}
    the
    first factor 
    in the next to the last row of \eqref{eq:l-pve2} is
    \begin{equation}
      \label{eq:l-pve5}
      \begin{array}{ll}
        \|\Lambda^\varepsilon (t,T, Y^{1,Z}, Z)  -
        \Lambda^\varepsilon(t,T, Y^{2,Z}, Z) \|_2  \\
        = \Big( \mathbb{E} | \Lambda^\varepsilon (t,T, Y^{1,Z}, Z)  -
        \Lambda^\varepsilon(t,T, Y^{2,Z}, Z) |^2 \Big)^{1/2} \\
        \le C_{\varepsilon,T} \Big( \mathbb{E} \sup_{r\in [t,T]}  |Y^{1,Z}(r)
        - Y^{2,Z}(r)|^2 \Big)^{1/2} \ & 
        \\  \le C_{\varepsilon,T} (|x_1 - x_2| + |y_1 - y_2|), \ &
      \end{array}
    \end{equation}
    where the last inequality follows from \eqref{eq:l-y2}.

    For the
    second factor 
    in the next to the last row
    in \eqref{eq:l-pve2}, we
    utilize the fact $U^2(x) \le C(1+x^2)$ for some large $C$ due to
    concavity of $U$
    \begin{equation}
      \label{eq:l-pve11}
      \|U( Y^{1,Z,+}(T))\|_2 = \Big(\mathbb{E} [ U^2(
      Y^{1,Z,+}(T))]\Big)^{1/2} \le C + C \| Y^{1,Z,+}(T) \|_2.
    \end{equation}
    Note that
    $|Y^{1,Z,+}(T) | \le |Y^{1,Z}(T) + \sum_{\tau_i \le T} c(\Delta
    Z_{\tau_i}) |$, one can use the result of \eqref{eq:l-y1} to obtain
    \begin{equation}
      \label{eq:l-pve6}
      \|Y^{1,Z,+}(T)\|_2 \le (\mathbb{E}  |Y^{t,x_1,y_1,Z}(s)  +
      \sum_{\tau_i \le s}  c(\Delta Z_{\tau_i}) |^2)^{1/2} \le C_{T}
      (|x_1| + |y_1|).
    \end{equation}
    Combining the inequalities \eqref{eq:l-pve11} and
    \eqref{eq:l-pve6}, we have
    \begin{equation}
      \label{eq:l-pve12}
      \|U( Y^{1,Z,+}(T))\|_2 \le C_{T}
      (|x_1| + |y_1| + 1).
    \end{equation}

    For the last term of \eqref{eq:l-pve2}, we use
    $|U(x_1) - U(x_2)| \le  U(|x_1 - x_2|)$ and Jensen's inequality to
    obtain
    \begin{equation*}
      \label{eq:l-pve7}
      \mathbb{E}  \Big|
      ( U( Y^{1,Z,+}(T)) - U(Y^{2,Z,+}(T))) \Big| \le
      U (\mathbb{E}            |Y^{1,Z,+}(T) - Y^{2,Z,+}(T)| ).
    \end{equation*}
    Also, thanks to \eqref{eq:l-y2}, we further obtain
    \begin{equation}
      \label{eq:l-pve8}
      \mathbb{E}  \Big|
      ( U( Y^{1,Z,+}(T)) - U(Y^{2,Z,+}(T))) \Big| \le
      U ( C_T (|x_1 - x_2| + |y_1 -
      y_2|)) .
    \end{equation}
    Coming back
    to \eqref{eq:l-pve2} with the estimates
    \eqref{eq:l-pve5}, 
    \eqref{eq:l-pve12}, and
    \eqref{eq:l-pve8}, we have
    \begin{equation}
      \label{eq:l-pve9}
      \begin{array}{l}
        |V^\varepsilon(t,x_1,y_1,z) -
        V^\varepsilon(t,x_2,y_2,z)|  \\
        \le
        C_{\varepsilon, T} (|x_1 - x_2| + |y_1 - y_2|) (|x_1| + |y_1|
        + 1) + U(C_T (|x_1 - x_2| + |y_1 - y_2|)).
      \end{array}
    \end{equation}
    Therefore, $V^\varepsilon$ is continuous in $(x,y)$.
  \item
    With  the continuity of $V^\varepsilon$ in $(x,y)$, we
    are now ready to establish the
    continuity of $V^\varepsilon$  in $t$. We 
    assume $t_1<t_2$ and fix $(x,y)$.  By the
    definition of    $V^\varepsilon$ in \eqref{eq:obje}, for any $Z_1\in
    \mathcal{Z}(t_1,z)$
    \begin{equation}
      \label{eq:l-pve13}
      \begin{array}{l}
        V^\varepsilon(t_1,x,y,z) - V^\varepsilon(t_2,x,y,z)    \ge \\
        \displaystyle
        \mathbb{E}_{t_1} [
        \Lambda^\varepsilon (t_1, t_2, Y^{t,x,y,z,Z_1}, Z_1)
        V^\varepsilon (t_2, X^{t_1,x}(t_2), Y^{t_1,x,y,z,Z_1}(t_2),
        Z_1(t_2))]
        - V^\varepsilon(t_2,x,y,z).
      \end{array}
    \end{equation}
    If we restrict $\sup$ of \eqref{eq:l-pve13} in $Z_1 \in
    \mathcal{Z}(t_1,z): Z_1(s) = z \ \forall s\in [t_1,t_2]$, then it
    gives a one-sided estimate
    \begin{equation}
      \label{eq:l-pve14}
      \begin{array}{l}
        V^\varepsilon(t_1,x,y,z) - V^\varepsilon(t_2,x,y,z)   \\
        \displaystyle \quad \ge
        \mathbb{E}_{t_1} [
        \Lambda^\varepsilon (t_1, t_2, Y^{t,x,y,z}, z)
        V^\varepsilon (t_2, X^{t_1,x}(t_2), Y^{t_1,x,y,z}(t_2), z)]
        - V^\varepsilon(t_2,x,y,z) \\
        \displaystyle\quad = \mathbb{E}_{t_1}
        [\Lambda^\varepsilon (t_1, t_2, Y^{t,x,y,z}, z)
        (V^\varepsilon(t_2,X^1(t_2),Y^1(t_2) ,z) -
        V^\varepsilon(t_2,x,y,z)) ] \\
        \hspace{2.5in} - \mathbb{E}_{t_1} [ (1 -
        \Lambda^\varepsilon(t_1, t_2, Y^1, z)) V^\varepsilon(t_2, x,
        y, z)].
      \end{array}
    \end{equation}
    The last
    term of \eqref{eq:l-pve14} vanishes
    as $t_2
    \to t_1$ by
    the dominated convergence theorem.
    The term on the next to the last line
    also goes to zero as $t_2 \to t_1$,
    due to
    \begin{enumerate}
    \item estimation of \eqref{eq:l-pve9} on  $V^\varepsilon$ in $(x,y)$
    \item
      the inequality
      \begin{equation*}
        \mathbb{E}[\sup_{t_1 \le s t_2} (|X^1(t_2) - x|^m + |Y^1(t_2)
        -y|^m] \le C_m (1 + |x|^m) (t_2 - t_1)^{m/2}; \hbox{ and }
      \end{equation*}
    \item
      $|\Lambda^\varepsilon| \le 1$.
    \end{enumerate}
    Therefore, $\lim_{t_2 \to t_1}        (V^\varepsilon(t_1,x,y,z) -
    V^\varepsilon(t_2,x,y,z)) \ge 0$, and $V^\varepsilon$ is left
    upper semicontinuous.
    For any $Z \in Z(t_1,z)$, we design $\hat Z (s) = Z(s)$
    for all $s\ge t_2$, and $\hat Z(t_2^-) = z$. Then $\hat Z \in
    \mathcal{Z}(t_2,z)$. Thus,
    \begin{equation*}
      \label{eq:l-pve15}
      \begin{array}{l}
        V^\varepsilon(t_1,x,y,z) - V^\varepsilon(t_2,x,y,z)   \\ \le
        \displaystyle
        \sup_{Z\in \mathcal{Z}(t_1,z)} \Big\{ \mathbb{E}_{t_1}
        [\Lambda^\varepsilon (t_1, t_2, Y^{1,Z}, Z)
        V^\varepsilon(t_2, X^1(t_2), Y^{1,Z}(t_2), Z(t_2))] \\
        \quad \hspace{3in}  \displaystyle - \mathbb{E}_{t_1} [
        J^\varepsilon(t_2, x,
        y, z, \hat Z)] \Big\}\\ \displaystyle \le
        \sup_{Z\in \mathcal{Z}(t_1,z)} \Big \{\mathbb{E}_{t_1}
        [\Lambda^\varepsilon (t_1, t_2, Y^{1,Z}, Z)
        V^\varepsilon(t_2, X^1(t_2), Y^{1,Z}(t_2), Z(t_2))] \\
        \quad \hspace{2.5in} \displaystyle - \mathbb{E}_{t_1} [
        V^\varepsilon(t_2, x,
        y- c(Z(t_2)        -z), Z(t_2))] \Big\}\\ \displaystyle \le
        \sup_{Z\in \mathcal{Z}(t_1,z)} \mathbb{E}_{t_1} \Big[
        V^\varepsilon(t_2, X^1(t_2), Y^{1,Z}(t_2), Z(t_2))  -
        V^\varepsilon(t_2, x, y- c(Z(t_2)-z), Z(t_2)) \Big].
      \end{array}
    \end{equation*}
    Observe that by the sub-additivity of $c(\cdot)$,
    \bea
    Y^{1,Z}(t_2)\ad = y + \int_{t_1}^{t_2} Z(s) dX(s) -
    \sum_{\tau_i \le t_2} c(\Delta Z(\tau_i))\\
    \ad \le y +
    \int_{t_1}^{t_2} Z(s) dX(s) - c(Z(t_2) - z).
    \eea
    Together with monotonicity of $V^\varepsilon$ in $y$, we obtain
    the
    desired estimate $$\lim_{t_2 \to t_1}
    (V^\varepsilon(t_1,x,y,z) -
    V^\varepsilon(t_2,x,y,z)) \le 0.$$
    In other words, $V^\varepsilon$ is left lower semicontinuous in
    $t$. Right continuity can be similarly shown along the above lines
    by forcing the limit $t_1\to t_2$.
  \item
    Note that by virtue of \eqref{eq:l-y1},
    \begin{equation*}
      \begin{array}{ll}
        V^\varepsilon(t,x,y,z)  & \le \sup_{Z\in \mathcal{Z}(t,z)}
        \mathbb{E} [U(Y^{t,x,y,z,Z}(T))] \\
        & \le  \sup_{Z\in \mathcal{Z}(t,z)} U(
        \mathbb{E}[Y^{t,x,y,z,Z}(T)]) \\
        & \le  \sup_{Z\in \mathcal{Z}(t,z)} U( y + C x).
      \end{array}
    \end{equation*}
    This, together with \eqref{eq:utility}, implies \eqref{eq:l-pve1}
    and \eqref{eq:l-pve1-1}.
  \end{enumerate}
\end{proof}

\subsection{Continuity of $V$}
\begin{asm}
  \label{a-con}
  For any $(t,x)\in (0,T) \times \mathbb{R}^+ $ and $0 \neq z\in
  \mathcal{K}$, either $zb(t,x) <0$ or $\sigma(t,x) \neq 0.$
\end{asm}

\begin{rem}{\rm
    If $\mathcal{K}$ includes both negative and positive integers, then
    $z b(t,x)<0$ is meaningless. But  if $\mathcal{K}$
    only contains
    nonnegative integers
    (that is, short position is prohibited),
    then $zb(t,x)<0$
    leads to $b(t,x)<0$.}
\end{rem}

Define the effective boundary of the domain as follows:
\begin{equation}
  \label{eq:p1Omega}
  \partial^1 O = \{(x,y,z): x>0, y = c(-z), z \in \mathcal{K}\}.
\end{equation}

\begin{lem}
  \label{l-as}
  For arbitrarily given initial data $(t,x,y,z) \in [0,T) \times
  \partial^1 O \cap \{z\neq
  0\}$ and $Z\in \mathcal{Z}(t,z)$, let $Y \triangleq Y^{t,x,y,z,Z}$ be
  a process of \eqref{eq:payoff2}. Under \asmref{a-con},  we have
  \begin{equation*}
    \label{eq:l-as1}
    \inf\{s>t: Y(s) < C(-Z(s)) \} = t \quad \mathbb{P}-a.s.
  \end{equation*}
\end{lem}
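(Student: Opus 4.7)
The plan is to reduce the statement to showing that the continuous semimartingale $h(s) := Y(s) - c(-Z(s))$, which starts at $h(t) = 0$, immediately takes strictly negative values almost surely. First I would examine whether $Z$ has a transaction at $t$. If $\tau_1 = t$ with $Z(t) = \tilde z \neq z$, then subadditivity of $c$ in \asmref{a-general}(2) gives
$$Y(t) = c(-z) - c(\tilde z - z) \le c(-\tilde z) = c(-Z(t)),$$
and when this inequality is strict the conclusion is immediate. Otherwise (and in the case $\tau_1 > t$), after relabeling $z$ by $z_0 := \tilde z$ or $z_0 := z$, we may assume $Z \equiv z_0 \neq 0$ on a right-neighborhood $[t, \tau_{\mathrm{next}})$ of $t$, with $Y(t) = c(-z_0)$.

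In this reduced setting I write
$$h(s) = A(s) + M(s), \quad A(s) = \int_t^s z_0 b(r,X(r))\,dr, \quad M(s) = \int_t^s z_0 \sigma(r,X(r))\,dW(r),$$
and invoke \asmref{a-con} at $(t,x,z_0)$. In the non-degenerate case $\sigma(t,x) \neq 0$, continuity of $\sigma$ and of $X$ gives $\langle M\rangle_{t+s} > 0$ a.s.\ for every $s > 0$; via the Dambis--Dubins--Schwarz time-change $M(s) = B(\langle M\rangle_s)$ for a standard Brownian motion $B$, the classical fact that $B$ visits $(-\infty,0)$ in every right-neighborhood of $0$ transfers to $M$. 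Since $|A(s)| = O(s-t)$ is dominated by the typical size $\sqrt{s-t}$ of $|M(s)|$, one obtains $h(s)<0$ for $s$ arbitrarily close to $t$.

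In the degenerate case $\sigma(t,x) = 0$, \asmref{a-con} forces $z_0 b(t,x) < 0$, so continuity of $b$ yields $A(s) < 0$ on a (random) right-neighborhood of $t$. Either $\sigma(r,X(r))$ vanishes on such an entire neighborhood (then $M \equiv 0$ and $h = A < 0$ at once), or $\langle M\rangle_{t+s} > 0$ for every $s > 0$ and the time-change argument again produces $s$ arbitrarily close to $t$ with $M(s) < 0$; combining with $A(s) < 0$ at such $s$ gives $h(s) < 0$. In either branch the first hitting time of $\{h < 0\}$ equals $t$ almost surely.

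I expect the subtle point to be the degenerate branch $\sigma(t,x) = 0$: although the drift is strictly negative at $t$, it is only linear in $s-t$ while the martingale fluctuates on the larger scale $\sqrt{s-t}$, so one cannot argue that drift dominates in absolute value. The correct viewpoint is that we only need the \emph{existence} of some $s$ with $h(s) < 0$, and for this the one-sided oscillation of Brownian motion under the time-change representation is the right tool. A minor caveat is the boundary case in which $Z$ jumps to $\tilde z = 0$ at $\tau_1 = t$: there $Y \equiv 0 \equiv c(-Z)$ and the strict inequality $h < 0$ can never hold; this is most naturally resolved by reading ``$<$'' as ``$\le$'' in the statement, consistently with the definition of $\tau$ in \eqref{eq:tau}.
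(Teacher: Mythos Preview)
Your strategy matches the paper's: split on whether $Z$ jumps at $t$, and in the no-jump case analyze the one-dimensional It\^o process $h(s)=Y(s)-c(-z_0)$ via Dambis--Dubins--Schwarz. The paper packages this step as \propref{p-2} in the appendix (proved through \propref{p-samplepath}); your treatment of the degenerate branch $\sigma(t,x)=0$ agrees with it. The divergence is in the non-degenerate branch $\sigma(t,x)\neq 0$: the paper removes the drift by Girsanov's theorem (the Novikov condition is verified locally in \propref{p-2}) and then applies DDS to a driftless process. Your substitute, ``$|A(s)|=O(s-t)$ is dominated by the typical size $\sqrt{s-t}$ of $|M(s)|$'', is only heuristic as written: knowing that $M$ visits $(-\infty,0)$ arbitrarily close to $t$ does not force $A(s)+M(s)<0$ at those same times, since $|M(s)|$ may be arbitrarily small there. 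To make your route rigorous you would need a quantitative input such as $\liminf_{u\to 0}B(u)/u=-\infty$ a.s.\ (obtainable by time-inversion of Brownian motion) combined with $\langle M\rangle_s\sim z_0^2\sigma(t,x)^2(s-t)$, so that $h(s)/(s-t)\to-\infty$ along a subsequence. That is a workable fix, but Girsanov is the cleaner device here.

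Your final caveat on the edge case $\tilde z=0$ is well observed; the paper's own argument on the set $\{Z(t)\neq z\}$ in fact only shows $Y(t)=c(-z)-c(Z(t)-z)<c(-z)$ rather than $Y(t)<c(-Z(t))$, so this loose end is shared.
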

\begin{proof}
  Given $Z \in \mathcal{Z}(t,z)$, we define $A = \{\omega: Z(t,\omega)
  = z\}$. For any $\omega \notin A$, one can see
  $$Y(t,\omega) = c(-z) - c(Z(t,\omega) - z) < c(-z),$$
  and thus,
  $$\inf\{s>t: Y(s)< c(-Z(s))\} = t \quad \mathbb{P}-a.s. \hbox{ in }
  \Omega\setminus A.$$
  Next, we want to show
  $$\inf\{s>t: Y(s)< c(-Z(s))\} = t \quad \mathbb{P}-a.s. \hbox{ in }
  A.$$
  Let $\rho(y,z) = c(-z) - y$. Consider $Z^1\in \mathcal{Z}(t,z)$
  given by
  $$Z^1(s, \omega) = Z(s, \omega) \one_{A} (\omega) + z \one_{A^c}
  (\omega), \ \forall s\in [t, T).$$
  In other words, $Z^1$ is constructed
  so that if
  there is a jump at $t$, then $Z^1$ follows exactly
  the sample path as $Z$,
  and if not $Z^1$ just takes constant $z$ before clear all risky
  asset at time $T$.

  We denote its associated state process with initial data $(t,x,y,z)$ by
  $(X^1(s), Y^1(s), Z^1(s))$. Then, because of the existence and
  uniqueness of the strong solution of \eqref{eq:stkprice},
  $$(X^1, Y^1, Z^1) \equiv (X, Y, Z), \ \mathbb{P}-a.s. \hbox{ in }
  A.$$
  Therefore, it is enough to show that
  $$\inf\{s>t: Y^1(s)< c(-Z^1(s))\} = t, \ \mathbb{P}-a.s. $$
  By It\^o's formula, for all $s<\tau_1$ of \eqref{eq:stopping}
  $$d \rho(Y^1(s), Z^1(s)) = d\rho(Y^1(s), z) = - z b(s, X^1(s)) ds +
  z \sigma (s, X^1(s)) d W(s).$$
  By \propref{p-2}, $\inf\{s>t: \rho(Y^1(s), Z^1(s))
  >0\} = t$ under \asmref{a-con}.
\end{proof}

\begin{thm}
  [Continuity of $V$] \label{t-vcon} Assume \asmref{a-con}.
  Then the value function $V$ given in
  \eqref{eq:obj} is  continuous in $(t,x,y)$.
\end{thm}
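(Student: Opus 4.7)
The strategy is to realize $V$ as the locally uniform limit of the continuous penalized value functions $V^\varepsilon$ of \eqref{eq:obje} as $\varepsilon\downarrow 0^+$. Since each $V^\varepsilon(\cdot,\cdot,\cdot,z)$ is continuous in $(t,x,y)$ by \lemref{l-pve}, locally uniform convergence immediately gives continuity of $V$.

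First, I would establish the monotone pointwise convergence $V^\varepsilon\downarrow V$. From \eqref{eq:penalty}, $\Lambda^\varepsilon$ is nondecreasing in $\varepsilon$, hence so is $V^\varepsilon$, so the pointwise limit $\bar V:=\lim_{\varepsilon\downarrow 0^+}V^\varepsilon$ exists and $\bar V\ge V$. For the reverse inequality, fix $\eta>0$ and pick an $\eta$-optimizer $Z^\varepsilon\in\mathcal{Z}(t,z)$ for $V^\varepsilon$; write $Y^\varepsilon=Y^{t,x,y,z,Z^\varepsilon}$ and let $\tau^\varepsilon$ be the associated first exit time from $O$. Split the expectation using $\mathbf 1_{\{\tau^\varepsilon=T\}}+\mathbf 1_{\{\tau^\varepsilon<T\}}$: on $\{\tau^\varepsilon=T\}$ we have $\Lambda^\varepsilon\equiv 1$ and the contribution is already bounded by $V(t,x,y,z)$. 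On $\{\tau^\varepsilon<T\}$, the integrand inside $\Lambda^\varepsilon$ is almost surely positive, hence $\Lambda^\varepsilon\to 0$ a.s.; together with the uniform $L^2$-envelope on $U(Y^\varepsilon(T))$ coming from concavity of $U$ and \lemref{l-y}, dominated convergence yields $\mathbb{E}[\Lambda^\varepsilon U(Y^\varepsilon(T))\mathbf 1_{\{\tau^\varepsilon<T\}}]\to 0$. Sending $\varepsilon\downarrow 0$ and then $\eta\downarrow 0$ gives $\bar V\le V$, so $\bar V=V$.

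Second, I would upgrade to local uniform convergence. Fix a compact set $K\subset[0,T]\times\bar O$ (with $z$ a parameter in the finite set $\mathcal{K}$) and $\eta>0$. The asymptotic bounds \eqref{eq:l-pve1}--\eqref{eq:l-pve1-1}, uniform in $(t,\varepsilon)$, permit truncating to a bounded region in $(x,y)$ with an $\eta$-error uniform in $\varepsilon$; this reduces the problem to a compact set on which $\{V^\varepsilon\}$ is a monotone family of continuous functions. If the pointwise limit $V$ is continuous on this reduced compact set, Dini's theorem delivers uniform convergence. Continuity in the interior of $O$ is inherited through the uniform estimates, whereas on $K\cap\partial^1 O$ with $z\ne 0$, \lemref{l-as} (which invokes \asmref{a-con}) guarantees instantaneous exit from $O$, so $V(t,x,y,z)=U(y-c(-z))=U(0)=0$ along that part of the boundary, matching $\lim_{\varepsilon\downarrow 0^+}V^\varepsilon$ there; the case $z=0$ on $\partial^1 O$ is handled directly, since holding zero shares keeps $Y\equiv y=0$ (payoff $U(0)=0$) and any trade pays a strictly positive $c(\cdot)$ and exits.

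The main obstacle is precisely continuity of $V$ along $\partial^1 O$, where the degeneracy of $(X,Y,Z)$ and unboundedness of the domain prevent the usual compactness/Dini arguments from applying off the shelf. This is where \asmref{a-con} is essential: without $zb<0$ or $\sigma\neq 0$, a trajectory started at $\partial^1 O$ could remain inside $\bar O$ for a positive interval, spoiling the identification $V=U(0)$ on the boundary and breaking boundary continuity. \lemref{l-as} closes precisely this gap by ensuring the required almost-sure instantaneous exit, after which the truncation via \eqref{eq:l-pve1}--\eqref{eq:l-pve1-1} and Dini's theorem on the compactified set complete the continuity proof for $V$.
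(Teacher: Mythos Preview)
Your proposal shares the paper's overall philosophy---approximate $V$ by the continuous penalized $V^\varepsilon$ and pass to a locally uniform limit---but the route diverges at the crucial step, and there is a genuine circularity.

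Your second step applies Dini's theorem on compact subsets of $[0,T]\times\bar O$, but Dini requires the pointwise limit $V$ to be continuous there, which is exactly the conclusion you are proving. You try to bootstrap by saying interior continuity is ``inherited through the uniform estimates,'' but those uniform estimates only follow \emph{after} Dini, so the argument loops. There is also a subtler issue in your first step: on $\{\tau^\varepsilon<T\}$ you claim $\Lambda^\varepsilon\to 0$ a.s., but the penalty integral $\int_t^T(c(-Z^\varepsilon(r))-Y^\varepsilon(r))^+\,dr$ depends on the $\varepsilon$-indexed near-optimizer $Z^\varepsilon$, and as written nothing rules out this integral being $O(\varepsilon)$ along the sequence.

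The paper avoids both problems by a different decomposition that never invokes Dini in the interior. It first proves $V^\varepsilon\to V=0$ pointwise \emph{only on the boundary} $\partial^2 O$ (via \lemref{l-as}), applies Dini on compact boundary slabs $\{x\le\bar x\}$ where the limit is the known continuous function $0$, and uses the uniform sublinear growth \eqref{eq:l-pve1} together with \lemref{l-fox} to control the unbounded tail, producing a boundary envelope $h^\varepsilon(x_0)\to 0$. The interior is then handled not by Dini but by the dynamic-programming identity
\[
V^\varepsilon(t,x,y,z)=\sup_{Z\in\mathcal Z(t,z)}\E\bigl[V^\varepsilon(\tau,X(\tau),Y(\tau),Z(\tau))\bigr],
\]
valid because $\Lambda^\varepsilon\equiv 1$ up to the exit time $\tau$. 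Splitting on $\{\tau<T\}$ versus $\{\tau=T\}$ and inserting the boundary envelope yields the explicit locally uniform bound $|V^\varepsilon-V|(t,x,y,z)\le Cx\,h^\varepsilon(x_0)$, which needs no a priori continuity of $V$ and immediately gives the result. This DPP-propagation from boundary to interior is the key idea missing from your outline.
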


\begin{proof}
  Fix the initial data $(t,x,y,z) \in [0,T) \times
  \partial^1
  O\cap \{z\neq 0\}$ and arbitrary      $Z\in \mathcal{Z}(t,z)$. Let $Y
  \triangleq Y^{t,x,y,z,Z}$ be a process of
  \eqref{eq:payoff2}.  By   \lemref{l-as}, for any $s\in [t,T)$
  \begin{equation*}
    \int_t^s  \Big(c(-Z(r)) - Y(r)\Big)^+ dr >0 \quad \mathbb{P}-a.s.
  \end{equation*}
  Hence, by definition \eqref{eq:penalty},
  \begin{equation*}\label{eq:t-vcon1}
    \lim_{\varepsilon\to 0^+} \Lambda^\varepsilon(t, s, Y, Z) = 0 \quad
    \mathbb{P}-a.s.
  \end{equation*}

  Fix a small $\delta>0$. Let $Z^\varepsilon
  \in  \mathcal{Z}(t,z)$ be a
  $\delta$-optimal control.
  That is,
  \bea
  V^\varepsilon(t,x,y,z)\ad \le \mathbb{E}_t
  [U(\Lambda^\varepsilon (t, T, Y^{t,x,y,z,Z^\varepsilon},Z^\varepsilon)
  Y^{t,x,y,z,Z^\varepsilon}(T))] + \delta\\
  \ad \triangleq \mathbb{E}_t
  [U(\Lambda^\varepsilon (t, T, Y^{\varepsilon},Z^\varepsilon)
  Y^{\varepsilon}(T))] + \delta
  , \eea
  with the notation $Y^\varepsilon \triangleq
  Y^{t,x,y,z,Z^\varepsilon}$.
  Such a  $\delta$-optimal control
  $Z^\varepsilon$ always exists for each $\varepsilon$.
  Since $V^\varepsilon$ is monotone in $\varepsilon$ and nonnegative,
  $\lim_{\varepsilon \to 0^+} V^\varepsilon(t,x,y,z) $ is
  well-defined. In
  addition, utilizing the fact $\lambda U( y) \le U(\lambda y)$ for
  any $\lambda\in (0,1)$
  \begin{equation*}
    \begin{array}{ll}
      \lim_{\varepsilon \to 0^+} V^\varepsilon(t,x,y,z) &\le
      \lim_{\varepsilon \to 0^+}
      \mathbb{E}_t [ \Lambda^\varepsilon (t, T,
      Y^{\varepsilon},Z^\varepsilon) U( Y^{\varepsilon}(T))] + \delta \\
      & \le       \lim_{\varepsilon \to 0^+}
      \mathbb{E}_t [U(\Lambda^\varepsilon (t, T,
      Y^{\varepsilon},Z^\varepsilon) Y^{\varepsilon, +}(T))] +
      \delta \\
      & \le \lim_\varepsilon U ( \mathbb{E}_t [\Lambda^\varepsilon
      (t, T,
      Y^{\varepsilon},Z^\varepsilon) Y^{\varepsilon, +}(T)]) +
      \delta \\
      &= U ( \lim_\varepsilon  \mathbb{E}_t [\Lambda^\varepsilon (t,
      T, Y^{\varepsilon},Z^\varepsilon) Y^{\varepsilon,+}(T)]) + \delta \\
      & = U ( \mathbb{E}_t [ \lim_\varepsilon  \Lambda^\varepsilon
      (t, T, Y^{\varepsilon},Z^\varepsilon) Y^{\varepsilon,+}(T)]) + \delta   \\
      & = \delta.
    \end{array}
  \end{equation*}
  Note that $V^\varepsilon \ge 0$ and $\delta>0$ is arbitrary.
  These imply the pointwise convergence of
  \begin{equation}
    \label{eq:t-vcon2-1}
    \lim_{\varepsilon \to 0} V^\varepsilon(t,x,y,z) = 0 =
    V(t,x,y,z), \quad \forall (t,x,y,z) \in [0,T) \times \partial^1 O
    \cap \{z \neq 0\}.
  \end{equation}
  It is immediate to show by definition that
  \begin{equation*}
    V^\varepsilon(t,0,c(-z),z) = 0 = V(t,0,c(-z),z), \hbox{ and }
    V^{\varepsilon}(T, x, c(-z),z) = 0 =     V(T, x, c(-z),z) .
  \end{equation*}
  In addition, we can show $\lim_{\varepsilon \to 0} V^\varepsilon
  (t,x,0,0) = 0$ since if $\tau_1$ exists (otherwise trivial)
  \begin{equation*}
    \begin{array}{ll}
      0 &\le \lim_{\varepsilon} V^\varepsilon(t,x,0,0)
      \\ &\le \lim_{\varepsilon} \mathbb{E} [
      V^\varepsilon(\tau_1,X^{t,x}(\tau_1), - c(Z(\tau_1)),
      Z(\tau_1))]
      \\ &\le \mathbb{E} [ \lim_{\varepsilon}
      V^\varepsilon(\tau_1,X^{t,x}(\tau_1), - c(Z(\tau_1)),
      Z(\tau_1))]
      \\ &\le \mathbb{E} [ \lim_{\varepsilon}
      V^\varepsilon(\tau_1,X^{t,x}(\tau_1), c(-Z(\tau_1)),
      Z(\tau_1))]
      \\ & = 0.
    \end{array}
  \end{equation*}
  In the above, we used the
  dominated convergence theorem,
  and applied \eqref{eq:t-vcon2-1} together with the
  fact $Z(\tau_1) \neq 0$. Now,
  we can rewrite \eqref{eq:t-vcon2-1} as
  \begin{equation}
    \label{eq:t-vcon2}
    \lim_{\varepsilon \to 0} V^\varepsilon(t,x,y,z) = 0 =
    V(t,x,y,z), \quad \forall (t,x,y,z) \in [0,T] \times
    \partial^2
    O,
  \end{equation}
  where $\partial^2 O$ is the closure of $\partial^1 O$,
  i.e.,
  \begin{equation*}
    \label{eq:p2Omega}
    \partial^2 O = \{(x,y,z): x\ge 0,
    y = c(-z), z \in \mathcal{K}\}.
  \end{equation*}

  Since $V^\varepsilon(t,x,y,z)$ is continuous on the  compact
  set $([0,T]\times \partial^2 O ) \cap \{x \le \bar x\}$ for
  arbitrary given positive $\bar x$ and converges
  monotonically
  to the zero function by \eqref{eq:t-vcon2},  Dini's theorem  implies
  that
  \begin{equation*}
    \label{eq:t-vcon7}
    \lim_{\varepsilon \to 0^+} V^\varepsilon(t,x,y,z) = 0 \hbox{  uniformly
      on } ([0,T]\times \partial^2 O ) \cap \{x \le \bar x \}.
  \end{equation*}
  Due to the uniform convergence, we can set a real function
  $h^\varepsilon(\cdot) : \mathbb{R}^+ \to  \mathbb{R}^+$ as
  \begin{equation*}
    h^\varepsilon(\bar x) \triangleq \sup\{V^\varepsilon(t,x,y,z):
    (t,x,y,z) \in [0,T]\times \partial^2 O \cap\{x \le \bar
    x\}\}.
  \end{equation*}
  Then
  \begin{equation}
    \label{eq:t-vcon10}
    \lim_{\varepsilon \to 0} h^\varepsilon(\bar x) = 0 \hbox{ for any
      given } \bar x>0.
  \end{equation}
  From \eqref{eq:l-pve1} of \lemref{l-pve}
  and \lemref{l-fox}, we have
  \begin{equation*}
    \lim_{x\to \infty} \frac{h^\varepsilon(x)}{x} = 0 \ \hbox{
      uniformly in } \varepsilon,
  \end{equation*}
  and therefore  there exists a large $x_0>0$ such that
  \begin{equation*}
    \frac{h^\varepsilon(x)}{x} \le \frac{h^\varepsilon(x_0)}{x_0} \
    \hbox{ for all } x\ge x_0
    \ \hbox{ and } \ \varepsilon >0.
  \end{equation*}
  Therefore, we have for all $(t,x,y,z)  \in [0,T] \times \partial^2
  O$
  $$V^\varepsilon(t,x,y,z) \le x \frac{h^\varepsilon(x)}{x} \le x
  \frac{h^\varepsilon(x_0)}{x_0}  = C x h^\varepsilon(x_0).$$
  Now we are ready to
  derive a bound of $V$ in terms of
  $V^\varepsilon$ in the domain
  $(t,x,y,z)\in [0,T]\times O$. First,
  we   observe that, since $\Lambda^\varepsilon(t,s,Y^{t,x,y,z,Z},Z)
  \equiv 1$ for any stopping time $s\le \tau$ of \eqref{eq:tau},
  we can write
  \begin{equation*}
    \label{eq:t-vcon6}
    V^\varepsilon(t,x,y,z)  = \sup_{Z \in \mathcal{Z}(t,z)}
    \mathbb{E}[ V^\varepsilon(\tau,          X^{t,x}(\tau),
    Y^{t,x,y,z,Z}(\tau), Z(\tau)) ].
  \end{equation*}
  Also, the state $(X^{t,x}(\tau), Y^{t,x,y,z,Z}(\tau), Z(\tau)) $
  must fall in $\partial^2 O$,  since $X^{t,x}(\tau) \neq 0$
  almost surely.
  Therefore, for any  $(t,x,y,z)\in  [0,T]\times O$,
  \begin{equation*}\label{eq:t-vcon5}
    \begin{array}{ll}
      V(t,x,y,z)  \le V^\varepsilon(t,x,y,z) \\
      \quad = \sup_{Z \in \mathcal{Z}(t,z)} \mathbb{E}[ V^\varepsilon(\tau,
      X^{t,x}(\tau), Y^{t,x,y,z,Z}(\tau), Z(\tau)) ]
      \\
      \quad \le \sup_{Z \in \mathcal{Z}(t,z)} \mathbb{E}[
      h^\varepsilon(X^{t,x}(\tau)) \one_{\{\tau <T\}} + V^\varepsilon(T,
      X^{t,x}(T), Y^{t,x,y,z,Z}(T), Z(T))  \one_{\{\tau  = T\}} ] \\
      \quad \le C h^\varepsilon(x_0) \mathbb{E}[X^{t,x}(\tau) ] +  \sup_{Z
        \in \mathcal{Z}(t,z)}      \mathbb{E}[ V^\varepsilon(T,
      X^{t,x}(T), Y^{t,x,y,z,Z}(T),
      Z(T))  \one_{\{\tau  = T\}} ] \\
      \quad \le C x h^\varepsilon(x_0) + V(t,x,y,z),
    \end{array}
  \end{equation*} where   $\tau$ is as in \eqref{eq:tau}.
  The above inequalities imply that $V^\varepsilon$ is a
  locally uniform
  estimate of $V$ on the $[0,T]\times O$ in the sense of
  \begin{equation*}
    |V^\varepsilon(t,x,y,z) - V(t,x,y,z)| \le C x  h^\varepsilon(x_0),
    \     \forall (t,x,y,z) \in  [0,T] \times O.
  \end{equation*}
  Finally, we  can  show continuity of $V$ in $(t,x,y)$. For
  any $(t_i,x_i,y_i,z) \in (0,T) \times O$ with $i=1,2$,
  \begin{equation*}
    \begin{array}{ll}
      |V(t_1,x_1,y_1,z) - V(t_2,x_2,y_2,z)| \\
      \le |(V- V^\varepsilon)(t_1,x_1,y_1,z)| +  |(V-
      V^\varepsilon)(t_2,x_2,y_2,z)| +
      |V^\varepsilon(t_1,x_1,y_1,z) -
      V^\varepsilon(t_2,x_2,y_2,z)|  \\
      \le C h^\varepsilon(x_0) (x_1 + x_2 ) +
      |V^\varepsilon(t_1,x_1,y_1,z) -
      V^\varepsilon(t_2,x_2,y_2,z)|.
    \end{array}
  \end{equation*}
  Letting $(t_1,x_1,y_1) \to (t_2,x_2,y_2)$, the last
  term disappears by \lemref{l-pve}, and
  \begin{equation*}
    \lim_{(t_1,x_1,y_1) \to (t_2,x_2,y_2)} |V(t_1,x_1,y_1,z) -
    V(t_2,x_2,y_2,z)|
    \le C x_2 h^\varepsilon(x_0) .
  \end{equation*}
  Thanks to \eqref{eq:t-vcon10}, $\lim_{\varepsilon \to 0}
  h^\varepsilon(x_0) = 0$,  and hence
  \begin{equation*}
    \lim_{(t_1,x_1,y_1) \to (t_2,x_2,y_2)} |V(t_1,x_1,y_1,z) -
    V(t_2,x_2,y_2,z)|  = 0.
  \end{equation*}
\end{proof}

\subsection{Auxiliary Results
  Derived from Continuity} Thanks to the continuity of $V$, now we
can show that the
no-action region is an open set,
which is crucial for the uniqueness. (see inequalities
\eqref{eq:t-cpr2} and \eqref{eq:t-cpr3} with application of Ishii's lemma)
\begin{prop}
  \label{p-nonaction}
  $\mathcal{A}[V](z)$ is open in $\mathbb{R}^3$ for any $z\in
  \mathcal{K}$.
\end{prop}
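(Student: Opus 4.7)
The plan is to realize $\mathcal{A}[V](z)$ as the preimage of the open half-line $(0,\infty)$ under a continuous function of $(t,x,y)$, which will immediately yield openness in $\mathbb{R}^3$. All the heavy lifting has already been done in \thmref{t-vcon} and \propref{p-opS}, so the argument reduces to stringing these pieces together carefully.

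First, I would fix $z \in \mathcal{K}$ and consider the restricted map $(t,x,y) \mapsto V(t,x,y,z)$. By \thmref{t-vcon}, this map is continuous on its domain. Next, I would invoke part~(3) of \propref{p-opS} (preservation of continuity) to deduce that $(t,x,y) \mapsto \mathcal{S}V(t,x,y,z)$ is also continuous. Concretely, $\mathcal{S}V(t,x,y,z) = \max_{\tilde z \in \Gamma(y,z)} V(t,x,y - c(\tilde z - z), \tilde z)$ is a finite maximum of continuous functions (the finiteness follows from $\mathcal{K}$ being finite in \eqref{eq:cK}), provided one verifies that the set-valued dependence $y \mapsto \Gamma(y,z)$ does not create discontinuities; by monotonicity of $V$ in $y$ and the structure of $\Gamma(y,z)$, a new admissible $\tilde z$ becomes available only when $y$ crosses a threshold at which the new candidate and the old maximizer yield the same value, so the $\max$ remains continuous.

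Having continuity of both $V(\cdot,\cdot,\cdot,z)$ and $\mathcal{S}V(\cdot,\cdot,\cdot,z)$, the difference $\Phi_z(t,x,y) \triangleq V(t,x,y,z) - \mathcal{S}V(t,x,y,z)$ is continuous as a function of $(t,x,y) \in \mathbb{R}^3$. By definition \eqref{eq:nonaction},
\[
\mathcal{A}[V](z) = \{(t,x,y) : \Phi_z(t,x,y) > 0\} = \Phi_z^{-1}\bigl((0,\infty)\bigr),
\]
and the preimage of an open set under a continuous function is open.

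I do not anticipate any genuine obstacle; the only subtle point is the continuity of $\mathcal{S}V$ in $y$ at the threshold values where $\Gamma(y,z)$ changes, but this is already covered by \propref{p-opS}(3) combined with the finiteness of $\mathcal{K}$. Thus the proof is essentially a one-line topological consequence of the preceding results.
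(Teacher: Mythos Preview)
Your proposal is correct and follows essentially the same approach as the paper: both argue that $V(\cdot,\cdot,\cdot,z)$ is continuous by \thmref{t-vcon}, that $\mathcal{S}V(\cdot,\cdot,\cdot,z)$ is continuous by \propref{p-opS}(3), and hence $\mathcal{A}[V](z) = \{(t,x,y): (V-\mathcal{S}V)(t,x,y,z)>0\}$ is the preimage of an open set under a continuous map. Your extra remark about the threshold behavior of $\Gamma(y,z)$ is a welcome elaboration but not strictly needed, since the paper already absorbs that issue into \propref{p-opS}(3).
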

\begin{proof}
  By the definition of $\mathcal{A}$ of \eqref{eq:nonaction}, we write
  \begin{equation*}
    \mathcal{A}[V] (z) = \{(t,x,y): V(t,x,y,z) >
    \mathcal{S} V(t,x,y,z) \} = \{(t,x,y): (V -
    \mathcal{S} V)(t,x,y,z) >0  \}.
  \end{equation*}
  Note that
  $V(\cdot, \cdot, \cdot, z)$ is continuous by \thmref{t-vcon}, so is
  $(V- \mathcal{S}V) (\cdot, \cdot, \cdot, z)$ by
  \propref{p-opS}. This implies $\mathcal{A}[V] (z)$ is an open set.
\end{proof}

\propref{p-nonaction} also enables us to characterize the  optimal
strategy by a $\mathbb{F}^X$-predictable process, where $\mathbb{F}^X$
is filtration generated by price process $X$. Practically, a trader
can observe only the price process $X$ (not the Brownian motion $W$), and
$\mathbb{F}^X$-predictable strategy is more desirable. We briefly
discuss the construction of the optimal strategy below.

By standard argument, the optimal strategy is essentially constructed
by a series of optimal stopping time problem. Indeed, given initial
state $(t,x,y,z)$, using $Y_1$ to denote the process $Y^{t,x,y,Z}$
with constant control $Z \equiv z$, the first transaction is occurred
at
\begin{equation}
  \label{eq:tau1-1}
  \tau_1 = \inf\{s\ge t: Y_1 \notin \mathcal{A}[V](z)\}
\end{equation}
and the size of transaction at $\tau_1$ is
\begin{equation}
  \label{eq:size1}
  Z(\tau_1) - Z(\tau_1^-) = \arg\max_{\Delta z} V(\tau_1^-, X(\tau_1^-),
  Y_1(\tau_1^-)- c(\Delta z), z+ \Delta z).
\end{equation}
The subsequent transaction times and sizes are determined repeatedly by
using the same procedure.

Note that, since $\mathcal{A}[V](z)$ is an open set by
\propref{p-nonaction},
$\tau_1$ of \eqref{eq:tau1-1}  is an $\mathbb{F}^{Y_1}$-stopping
time, where $\mathbb{F}^{Y_1}$ is the filtration generated by
$Y_1$. Furthermore, together with the fact $\mathbb{F}^{Y_1} \subset
\mathbb{F}^X$, this implies that $\tau_1$ is
an $\mathbb{F}^X$-stopping time. Also
note that in \eqref{eq:size1}, the jump size of $Z(\tau_1) -
Z(\tau_1^-)$ is measurable with respect to
$\mathbb{F}^X(\tau_1^-)$. Repeating above argument to the subsequent
jump times, one can show that
the above constructed process is $\mathbb{F}^X$-predictable.

\section{Characterization of  Value Function}\label{sec:val}
In this section, we will show the value function is the unique
viscosity solution of \eqref{eq:qvi} with condition \eqref{eq:btc}.
First, we give definition of viscosity solution:
\begin{defn}
  \label{d-visc}  {\rm A function
    $u$ is said to be
    a viscosity subsolution (resp. supersolution) of
    \eqref{eq:qvi}-\eqref{eq:btc}, if
    \begin{enumerate}
    \item for any $(t_0,x_0,y_0,z_0) \in (0,T) \times O$ and function
      $\varphi \in C^{1,2,2}((0,T)\times O, \mathbb{R})$ satisfying
      \begin{equation*}
        \label{eq:d-visc1}
        \varphi \ge (\hbox{resp. } \le) \  u \hbox{ on } (0,T) \times
        O \ \hbox{ and }   \varphi = u \hbox{ at } (t_0,x_0,y_0,z_0),
      \end{equation*}
      the  following inequality holds:
      \begin{equation*}
        \label{eq:d-visc2}
        \min\{(-\varphi_t - \mathcal{L} \varphi)
        (t_0, x_0, y_0, z_0), (\varphi
        - \mathcal{S} \varphi) (t_0, x_0, y_0, z_0)\} \le (\hbox{resp. }
        \ge)  0,
      \end{equation*}
      and
    \item
      $u(t,x,y,z) \le (\hbox{resp. } \ge)\  U(y - c(-z)), \hbox{ on
      } \partial^* ([0,T) \times O).$
    \end{enumerate}
    The $u$ is said to be a viscosity solution, if it is both a
    viscosity subsolution and a viscosity supersolution. }\end{defn}

\subsection{Viscosity Solution Properties}

Next, we show the objective function $V$ of
\eqref{eq:obj} is a viscosity solution of quasi-variational inequality
\eqref{eq:qvi}-\eqref{eq:btc}.
\begin{thm}[Viscosity properties]
  \label{t-existence}
  The objective function $V(t,x,y,z)$ of \eqref{eq:obj} is a viscosity
  solution of the quasi-variational inequality
  \eqref{eq:qvi} with boundary-terminal condition \eqref{eq:btc}.
\end{thm}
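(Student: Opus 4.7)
The plan is to verify the viscosity subsolution and supersolution inequalities separately, using the dynamic programming principle \eqref{eq:dpp} together with its immediate consequence \eqref{eq:dpp1}, and relying on the continuity of $V$ from \thmref{t-vcon}, the structural properties of $\mathcal{S}$ from \propref{p-opS}, and the openness of the no-action region from \propref{p-nonaction}. The boundary-terminal condition \eqref{eq:btc} is immediate from the setup: on $\{y=c(-z)\}$ the stopping time $\tau$ equals the current time, so $V=U(0)=U(y-c(-z))$; at $t=T$, the requirement $Z(T)=0$ from \defnref{d-conspace} forces instantaneous liquidation with no dynamics, giving $V(T,x,y,z)=U(y-c(-z))$.

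For the supersolution property, fix $(t_0,x_0,y_0,z_0)\in (0,T)\times O$ and a test $\varphi\in C^{1,2,2}$ with $\varphi\le V$ and $\varphi=V$ at $(t_0,x_0,y_0,z_0)$. The obstacle inequality follows from $\mathcal{S}\varphi\le\mathcal{S}V$ (monotonicity in \propref{p-opS}), giving $(\varphi-\mathcal{S}\varphi)(t_0)\ge (V-\mathcal{S}V)(t_0)\ge 0$ by \eqref{eq:dpp1}. For the PDE inequality I apply \eqref{eq:dpp} with the no-transaction control $Z\equiv z_0$ and the stopping time $\theta_h=(t_0+h)\wedge\tau\wedge\tau_B$, where $\tau_B$ is the first exit of $(X,Y)$ from a small ball $B\subset O$ around $(x_0,y_0)$. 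This yields $V(t_0)\ge\mathbb{E}[V(\theta_h,X(\theta_h),Y(\theta_h),z_0)]$, and combined with $V\ge\varphi$ and equality at $(t_0)$ gives $\varphi(t_0)\ge\mathbb{E}[\varphi(\theta_h,\cdots)]$. Applying It\^o's formula to $\varphi$ along this trajectory (the Brownian martingale part vanishes in expectation because the integrands are bounded on $B$), dividing by $h$, sending $h\to 0^+$, and using continuity of the coefficients yields $(-\varphi_t-\mathcal{L}\varphi)(t_0)\ge 0$.

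For the subsolution property, let $\varphi\ge V$ with equality at $(t_0,x_0,y_0,z_0)$. If $(\varphi-\mathcal{S}\varphi)(t_0)\le 0$ we are done; otherwise $\varphi(t_0)>\mathcal{S}\varphi(t_0)\ge\mathcal{S}V(t_0)$ (using $\varphi\ge V$ and monotonicity), so $V(t_0)=\varphi(t_0)>\mathcal{S}V(t_0)$ and $(t_0,x_0,y_0)\in\mathcal{A}[V](z_0)$. By \propref{p-nonaction} this set is open, and continuity yields a small ball $B$ with $V-\mathcal{S}V\ge\alpha>0$ on $B$. Suppose for contradiction that $-\varphi_t-\mathcal{L}\varphi>0$ at $(t_0)$; shrinking $B$ we may assume $-\varphi_t-\mathcal{L}\varphi\ge\beta>0$ on $B$. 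For $\varepsilon>0$ pick an $\varepsilon$-optimal control $Z^\varepsilon$ in \eqref{eq:dpp} with stopping time $\tau_1^\varepsilon\wedge\theta_h$, where $\tau_1^\varepsilon$ is the first jump time of $Z^\varepsilon$. On $\{\tau_1^\varepsilon<\theta_h\}$ the pre-jump state lies in $B$, so the post-transaction value is bounded above by $\mathcal{S}V\le V-\alpha$; on $\{\tau_1^\varepsilon\ge\theta_h\}$ no transaction has occurred and $Z^\varepsilon\equiv z_0$. Substituting into the DPP, using $V\le\varphi$ and It\^o's formula for $\varphi$ along the no-transaction trajectory, then subtracting $\varphi(t_0)$ gives
\begin{equation*}
0\le\mathbb{E}\Big[\int_{t_0}^{\tau_1^\varepsilon\wedge\theta_h}(\varphi_t+\mathcal{L}\varphi)\,dr\Big]-\alpha\,\mathbb{P}(\tau_1^\varepsilon<\theta_h)+\varepsilon\le -\beta\,\mathbb{E}[\tau_1^\varepsilon\wedge\theta_h-t_0]-\alpha\,\mathbb{P}(\tau_1^\varepsilon<\theta_h)+\varepsilon.
\end{equation*}
Choosing $h<\alpha/\beta$ so that the error from an early transaction stays positive after absorbing the integral contribution, and using the standard lower bound $\mathbb{E}[\theta_h-t_0]\ge c(B,h)>0$ coming from non-degeneracy of $X$ under \asmref{a-con}, produces a contradiction once $\varepsilon<\beta c(B,h)$.

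The main obstacle is precisely this Case B of the subsolution step: the $\varepsilon$-optimal controls supplied by the DPP need not refrain from transacting inside the no-action ball $B$, so one must localize at the first jump time $\tau_1^\varepsilon$ and exploit the uniform gap $V-\mathcal{S}V\ge\alpha$ on $B$ (whose existence is guaranteed by \propref{p-nonaction} and \thmref{t-vcon}) to absorb any such early transactions into a vanishing error term. The balance $h<\alpha/\beta$ against the positive expected exit time driven by the non-degeneracy afforded by \asmref{a-con} is what finally drives the contradiction.
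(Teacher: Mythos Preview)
Your argument is correct and the supersolution half is essentially the paper's (the paper wraps the same It\^o computation in a contradiction with a strict subtest function, while you divide by $h$ directly).  The subsolution half, however, proceeds differently.  The paper introduces the perturbed test function $\phi=\varphi+|t-t_0|^2+|x-x_0|^4+|y-y_0|^4$, localizes in a ball contained in $\mathcal{A}[V](z_0)$, and then closes the contradiction by invoking the \emph{martingale identity} $V(t_0,\eta_0)=\mathbb{E}[V(\theta,X(\theta),Y(\theta),z_0)]$ inside the no-action region, which was only stated heuristically from the DPP before the theorem.  You instead work with $\varepsilon$-optimal controls, stop at $\tau_1^\varepsilon\wedge\theta_h$, and use the uniform gap $V-\mathcal{S}V\ge\alpha$ on the ball to penalize any premature jump.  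Your choice $h<\alpha/\beta$ is exactly what makes this work: writing $c_0=\mathbb{E}[\theta_h-t_0]$ and $p=\mathbb{P}(\tau_1^\varepsilon<\theta_h)$, one has
\[
\beta\,\mathbb{E}[\tau_1^\varepsilon\wedge\theta_h-t_0]+\alpha p\ \ge\ \beta(c_0-hp)+\alpha p\ =\ \beta c_0+(\alpha-\beta h)p\ \ge\ \beta c_0,
\]
uniformly in $\varepsilon$, and $\varepsilon<\beta c_0$ gives the contradiction.  This route is more self-contained, since it does not appeal to the unproven martingale property; the paper's route is shorter once that identity is accepted.  One small correction: the lower bound $\mathbb{E}[\theta_h-t_0]>0$ follows simply from continuity of $(X,Y)$ and the fact that the process starts strictly inside the ball; \asmref{a-con} is not needed here.
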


\begin{proof} The proof is divided into two steps.
  \begin{enumerate}
  \item First, we prove
    that $V$ is a supersolution of
    \eqref{eq:qvi}.
    If not, there would exist $(t_0,
    \eta_0) \triangleq (t_0, x_0, y_0, z_0)$ and a
    function
    $\varphi \in C^{1,2,2}((0,T)\times O, \mathbb{R})$ with
    $$\varphi \le V, \hbox{ and } \varphi (t_0, \eta_0) = V (t_0,
    \eta_0)$$
    satisfying
    \begin{equation}
      \label{eq:t-existence1}
      \min\{(-\varphi_t - \mathcal{L} \varphi)
      (t_0, \eta_0), (\varphi
      - \mathcal{S} \varphi) (t_0, \eta_0)\} < 0.
    \end{equation}
    Since by \eqref{eq:dpp1} and monotonicity of
    \propref{p-opS},
    $$\varphi(t_0, \eta_0) = V(t_0, \eta_0) \ge \mathcal{S} V(t_0,
    \eta_0) \ge \mathcal{S} \varphi(t_0, \eta_0),$$  and
    \eqref{eq:t-existence1} is equivalent to
    \begin{equation}
      \label{eq:t-existence2}
      (-\varphi_t - \mathcal{L} \varphi)(t_0, \eta_0) <0.
    \end{equation}
    We introduce a
    strict subtest
    function $\phi(\cdot)$ given by
    \begin{equation*}
      \phi(t,x,y,z) = \varphi(t,x,y,z) - |t - t_0|^2 - |x - x_0|^4 -
      |y - y_0|^4.
    \end{equation*}
    One can check $\phi$ also satisfies  inequality
    \eqref{eq:t-existence2}, i.e.,
    \begin{equation*}
      \label{eq:t-existence3}
      (-\phi_t - \mathcal{L} \phi)(t_0, \eta_0) <0.
    \end{equation*}
    Since $-\phi_t - \mathcal{L} \phi$ is continuous in $(t,x,y)$,
    \begin{equation*}
      \{(t,x,y): (-\phi_t - \mathcal{L}\phi)(t,x,y,z_0)<0\}
    \end{equation*}
    is an open set. Now, we can take a small open ball $B_r(t_0, x_0,
    y_0) \times \{z_0\} \subset (0,T) \times O$ such that
    \begin{equation*}
      (-\phi_t -\mathcal{L} \phi)(t,\eta) <0, \hbox{ in } B_r(t_0,
      x_0, y_0) \times \{z_0\}.
    \end{equation*}
    Observe that $\forall (t,\eta) \in \partial B_r(t_0, x_0, y_0)
    \times \{z_0\}$
    \begin{equation}
      \label{eq:t-existence4}
      \varphi(t,\eta) - \phi(t,\eta) = |t-t_0|^2 + |x- x_0|^4 +
      |y-y_0|^4 \ge 1 \wedge \frac{r^4}{3} \triangleq \varepsilon.
    \end{equation}
    Consider the stopping time $\theta$ defined by
    \begin{equation*}
      \theta = \{ s\ge t_0: (s, X^{t_0,x_0}(s), Y^{t_0,x_0,y_0,z_0}(s)
      \notin B_r(t_0,x_0,y_0) \}.
    \end{equation*}
    Applying It\^{o}'s formula on $\phi$, with notations $X^{t_0,x_0}
    \triangleq X$, $Y^{t_0,x_0,y_0,Z_0} = Y$, and $Z_0(\cdot) \equiv
    z_0$, we have
    \begin{equation*}
      \label{eq:t-existence5}
      \begin{array}{ll}
        V(t_0,\eta_0) & = \phi(t_0,\eta_0) \\
        & = \mathbb{E}[ \phi(\theta, X(\theta), Y(\theta), z_0) -
        \disp\int_{t_0}^{\theta} (\phi_t + \mathcal{L} \phi) (s, X(s),
        Y(s), z_0) ds] \\&
        \le \mathbb{E}[\phi(\theta, X(\theta), Y(\theta), z_0)] \\&
        \le \mathbb{E}[\varphi(\theta, X(\theta), Y(\theta), z_0)] -
        \varepsilon \\&
        \le \mathbb{E}[V(\theta, X(\theta), Y(\theta), z_0)] -
        \varepsilon \\ & \le V(t_0, \eta_0) - \varepsilon.
      \end{array}
    \end{equation*}
    This leads to  a contradiction and completes the proof of
    viscosity supersolution property.
  \item Next, we show the viscosity subsolution property. To the
    contrary, if there exists $(t_0, \eta_0) \triangleq (t_0, x_0,
    y_0, z_0)$ and a function $\varphi \in C^{1,2,2}((0,T)
    \times O,  \mathbb{R})$ with
    \begin{equation*}
      \varphi \ge V, \hbox{ and } \varphi(t_0, \eta_0) = V(t_0,
      \eta_0)
    \end{equation*}
    satisfying
    \begin{equation*}
      \min\{(-\varphi_t - \mathcal{L} \varphi) (t_0, \eta_0), (\varphi
      - \mathcal{S} \varphi) (t_0, \eta_0)\} > 0.
    \end{equation*}
    One can rewrite the above inequality as
    \begin{equation}\label{eq:t-existence6}
      (-\varphi_t - \mathcal{L} \varphi) (t_0, \eta_0)>0, \quad  (\varphi
      - \mathcal{S} \varphi) (t_0, \eta_0) > 0.
    \end{equation}
    The second inequality of \eqref{eq:t-existence6}, together with the
    monotonicity of $\mathcal{S}$ of \propref{p-opS}, leads to
    \begin{equation*}
      V(t_0, \eta_0) = \varphi(t_0, \eta_0) > \mathcal{S} \varphi(t_0,
      \eta_0) \ge \mathcal{S} V(t_0, \eta_0),
    \end{equation*}
    that is equivalent to
    \begin{equation}
      \label{eq:t-existence7}
      (t_0, x_0, y_0) \in \mathcal{A}[V](z_0),
    \end{equation}
    Now, we consider a
    test function $\phi$ given by
    \begin{equation*}
      \phi(t,x,y,z) = \varphi(t,x,y,z) + |t- t_0|^2
      + |x-x_0|^4 + |y -
      y_0|^4.
    \end{equation*}
    One can check that, by \eqref{eq:t-existence6}
    \begin{equation*}
      (-\phi_t - \mathcal{L} \phi)(t_0, \eta_0) >0.
    \end{equation*}
    Since $(-\phi_t - \mathcal{L} \phi)$ is continuous in $(t,x,y)$,
    \begin{equation*}
      \{(t,x,y): (-\phi_t -\mathcal{L}\phi)(t,x,y,z_0)>0\}
    \end{equation*}
    is an open set. Note also that
    \eqref{eq:t-existence7} together with
    \propref{p-nonaction} implies $\mathcal{A}[V](z_0)$ is
    a non-empty
    open set. Thus,
    \begin{equation}\label{eq:t-existence8}
      \{(t,x,y): (-\phi_t -\mathcal{L}\phi)(t,x,y,z_0)>0\} \cap
      \mathcal{A}[V](z_0)
    \end{equation}
    is also a non-empty set. We can
    take a small open
    ball
    $B_r(t_0, x_0, y_0) \times \{z_0\}$ contained in the open set of
    \eqref{eq:t-existence8}, i.e.,
    \begin{equation*}\label{eq:t-existence9}
      (-\phi_t - \mathcal{L}\phi)(t,\eta)>0, \quad V(t,\eta) >
      \mathcal{S} V(t,\eta), \quad \forall (t,\eta) \in B_r (t_0, x_0,
      y_0) \times \{z_0\}.
    \end{equation*}
    Similar to \eqref{eq:t-existence4}, we also have
    \begin{equation*}
      \phi(t,\eta) - \varphi(t,\eta) = |t- t_0|^2 + |x-x_0|^4 +
      |y-y_0|^4 \ge 1\wedge \frac{r^4}{3} \triangleq \varepsilon.
    \end{equation*}
    Define
    \begin{equation*}
      \theta = \{ s\ge t_0: (s, X^{t_0,x_0}(s), Y^{t_0,x_0,y_0,z_0}(s)
      \notin B_r(t_0,x_0,y_0) \}.
    \end{equation*}
    Applying It\^o's formula to
    $\phi$, with notations $X^{t_0,x_0}
    \triangleq X$, $Y^{t_0,x_0,y_0,Z_0} = Y$, and $Z_0(\cdot) \equiv
    z_0$, we obtain
    \begin{equation*}
      \begin{array}{ll}
        V(t_0,\eta_0) & = \phi(t_0,\eta_0) \\
        & = \mathbb{E}[ \phi(\theta, X(\theta), Y(\theta), z_0) -
        \disp \int_{t_0}^{\theta} (\phi_t + \mathcal{L} \phi) (s, X(s),
        Y(s), z_0) ds] \\&
        \ge \mathbb{E}[\phi(\theta, X(\theta), Y(\theta), z_0)] \\&
        \ge \mathbb{E}[\varphi(\theta, X(\theta), Y(\theta), z_0)] +
        \varepsilon \\&
        \ge \mathbb{E}[V(\theta, X(\theta), Y(\theta), z_0)] +
        \varepsilon.
      \end{array}
    \end{equation*}
    Since
    $V(t_0, \eta_0) = \mathbb{E}[V(\theta, X(\theta), Y(\theta), z_0)]
    $ in the no-action region $\mathcal{A}[V](z_0)$, this leads to a
    contradiction.
  \end{enumerate}
\end{proof}

\subsection{Uniqueness}
In this part, we establish
the uniqueness
in the sense of viscosity solution for the
quasi-variational inequality \eqref{eq:qvi} with boundary-terminal
condition \eqref{eq:btc}.

Throughout this section, we assume that $u$ and $v$ are continuous
sub- and supersolution of \eqref{eq:qvi} and \eqref{eq:btc},
respectively, satisfying sublinear growth of the form, for
$\varphi = u, v$
\begin{equation}
  \label{eq:sublinear}
  \lim_{x \to \infty} \sup_{t} \frac{\varphi(t,x,y,z)}{x}
  = 0, \forall (y,z), \hbox{ and }   \lim_{y \to \infty}
  \sup_{t} \frac{\varphi(t,x,y,z)}{y} = 0, \forall (x,z).
\end{equation}
We are to show a comparison result
$$u\ge v $$
on the entire domain, which implies uniqueness.

\begin{asm}
  \label{a-bsbd} {\rm
    The $b$ and $\sigma$ are uniformly bounded,
    i.e., there exists a positive
    constant $C_4$ such that $ \sup_{[0,T]\times [0,\infty)} |b(t,x)| +
    |\sigma(t,x)| < C_4.$ }
\end{asm}

Define constants
$$\rho = \frac 1 2 \min_{z\neq 0} c(z)>0,
\ \hbox{ and }\ C_5 = \|b\|_\infty
(C_2 \vee C_3 +1) + 2 \rho $$
and
\begin{equation*}
  \label{eq:ve}
  v^\varepsilon(t,x,y,z) = v(t,x,y,z) + \varepsilon g(t,x,y,z)
\end{equation*}
where $g(t,x,y,z) = x + y + C_5(T - t)$.

\begin{lem}
  \label{l-ssup}
  $v^\varepsilon$ is a strict supersolution, i.e., any
  smooth
  test
  function $\varphi^\varepsilon$ with $\varphi^\varepsilon =
  v^\varepsilon$ at $(\bar t, \bar x, \bar y, \bar z)\in (0,T)\times
  O$ satisfies
  \begin{equation}
    \label{eq:l-ssup1}
    (\varphi^\varepsilon - \mathcal{S} \varphi^\varepsilon)(\bar t, \bar
    x, \bar y, \bar z) >
    \varepsilon \rho > 0,
  \end{equation}
  and
  \begin{equation}
    \label{eq:l-ssup2}
    (-\varphi^\varepsilon_t - \mathcal{L} \varphi^\varepsilon) (\bar t,
    \bar x, \bar y, \bar z)
    > \varepsilon \rho>0.
  \end{equation}
\end{lem}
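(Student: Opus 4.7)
The plan is to reduce the strict supersolution property of $v^\varepsilon$ to the (non-strict) supersolution property of $v$ by subtracting off the explicit perturbation $\varepsilon g$, then to produce the strict gap using direct computations on $g$. Fix a point $(\bar t, \bar x, \bar y, \bar z)\in (0,T)\times O$ and let $\varphi^\varepsilon\in C^{1,2,2}$ be a test function for $v^\varepsilon$ at this point, i.e., touching $v^\varepsilon$ from below with equality at $(\bar t,\bar x,\bar y,\bar z)$. Set $\varphi:=\varphi^\varepsilon-\varepsilon g$. Since $v^\varepsilon-\varepsilon g=v$, the function $\varphi$ is a valid supersolution test function for $v$ at $(\bar t,\bar x,\bar y,\bar z)$, so the hypothesis on $v$ gives
\begin{equation*}
\min\{(-\varphi_t-\mathcal{L}\varphi)(\bar t,\bar x,\bar y,\bar z),\;(\varphi-\mathcal{S}\varphi)(\bar t,\bar x,\bar y,\bar z)\}\ \ge\ 0.
\end{equation*}

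For the parabolic inequality \eqref{eq:l-ssup2}, I would compute $g_t=-C_5$, $g_x=g_y=1$, and all second derivatives of $g$ vanish, so from the definition \eqref{eq:opL}
\begin{equation*}
(-g_t-\mathcal{L}g)(t,x,y,z)\ =\ C_5-b(t,x)(1+z).
\end{equation*}
Under \asmref{a-bsbd} and the bound $|z|\le C_2\vee C_3$, one has $|b(t,x)(1+z)|\le\|b\|_\infty(C_2\vee C_3+1)$, so the choice of $C_5$ yields $-g_t-\mathcal{L}g\ge 2\rho$. Therefore
\begin{equation*}
(-\varphi^\varepsilon_t-\mathcal{L}\varphi^\varepsilon)(\bar t,\bar x,\bar y,\bar z)\ =\ (-\varphi_t-\mathcal{L}\varphi)(\bar t,\bar x,\bar y,\bar z)+\varepsilon(-g_t-\mathcal{L}g)(\bar t,\bar x,\bar y,\bar z)\ \ge\ 2\varepsilon\rho\ >\ \varepsilon\rho,
\end{equation*}
which is \eqref{eq:l-ssup2}.

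For the switching inequality \eqref{eq:l-ssup1}, the key computation is the exact value of $\mathcal{S}g$. Because $g$ is affine in $y$ and independent of $z$ in the form $x+y+C_5(T-t)$, the definition \eqref{eq:opS} gives, whenever $\Gamma(y,z)\neq\emptyset$,
\begin{equation*}
\mathcal{S}g(t,x,y,z)\ =\ g(t,x,y,z)-\min_{\tilde z\in\Gamma(y,z)}c(\tilde z-z),
\end{equation*}
and $=-\infty$ otherwise. Since $\tilde z\neq z$ in $\Gamma(y,z)$, the minimum above is bounded below by $\min_{k\in\mathcal{K}\setminus\{0\}}c(k)=2\rho$, hence $g-\mathcal{S}g\ge 2\rho$ everywhere. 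Combining with the sub-distributivity of $\mathcal{S}$ from \propref{p-opS}(ii), namely $\mathcal{S}(\varphi+\varepsilon g)\le \mathcal{S}\varphi+\varepsilon\mathcal{S}g$, I obtain
\begin{equation*}
(\varphi^\varepsilon-\mathcal{S}\varphi^\varepsilon)(\bar t,\bar x,\bar y,\bar z)\ \ge\ (\varphi-\mathcal{S}\varphi)(\bar t,\bar x,\bar y,\bar z)+\varepsilon(g-\mathcal{S}g)(\bar t,\bar x,\bar y,\bar z)\ \ge\ 2\varepsilon\rho\ >\ \varepsilon\rho,
\end{equation*}
giving \eqref{eq:l-ssup1}. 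No real obstacle is anticipated; the only point to be careful about is the direction of the inequality in sub-distributivity, which goes the right way precisely because we want a \emph{lower} bound on $\varphi^\varepsilon-\mathcal{S}\varphi^\varepsilon$, and the case $\Gamma(y,z)=\emptyset$, where $\mathcal{S}\varphi^\varepsilon=-\infty$ makes \eqref{eq:l-ssup1} trivial.
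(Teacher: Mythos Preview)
Your proof is correct and follows essentially the same approach as the paper: set $\varphi:=\varphi^\varepsilon-\varepsilon g$, use the supersolution property of $v$ for $\varphi$, then add back the strict margin coming from the explicit computations $-g_t-\mathcal{L}g\ge 2\rho$ and $g-\mathcal{S}g\ge 2\rho$ via sub-distributivity of $\mathcal{S}$. Your version simply makes explicit the computations on $g$ (and the empty $\Gamma(y,z)$ case) that the paper leaves to the reader.
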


\begin{proof} Note that $\varphi \triangleq \varphi^\varepsilon -
  \varepsilon g$
  is a
  test function of $v$ at $(\bar t, \bar x, \bar y, \bar z)$,
  and by viscosity supersolution property
  $$\min\{ (-\varphi_t - \mathcal{L} \varphi)(\bar t, \bar x, \bar y,
  \bar  z),  (\varphi - \mathcal{S} \varphi) (\bar t, \bar x, \bar y,
  \bar
  z) \} \ge 0,$$
  Using \propref{p-opS}, \eqref{eq:l-ssup1} is obtained from
  \begin{equation*}
    (\varphi^\varepsilon - \mathcal{S} \varphi^\varepsilon) (\bar t,
    \bar x, \bar y, \bar z) \ge (\varphi - \mathcal{S} \varphi) (\bar
    t, \bar x, \bar y, \bar z) + \varepsilon (g - \mathcal{S} g) (\bar
    t, \bar x, \bar y, \bar z)  \ge \varepsilon \rho.
  \end{equation*}
  Equation
  \eqref{eq:l-ssup2} is the result of viscosity supersolution property
  of $v$ and
  \begin{equation*}
    g_t + \mathcal{L}g < - \rho.
  \end{equation*}
\end{proof}

\begin{lem}
  \label{l-funH}
  Let
  \begin{equation}
    \label{eq:l-funH1}
    H(t,x,y,z) = u(t,x,y,z) - v^\varepsilon (t,x,y,z).
  \end{equation}
  Then $H(t,x,y,z)$ attains its maximum in
  $[0,T] \times \bar O$,
  i.e.,
  $\exists (\hat t, \hat x, \hat y, \hat z) \in [0,T] \times \bar
  O$ such that
  \begin{equation*}
    H(\hat t, \hat x, \hat y, \hat z) = \max_{[0,T]\times \bar O} H(t,x,y,z).
  \end{equation*}
  Moreover,
  \begin{equation*}
    \label{eq:l-funH2}
    (u - \mathcal{S} u) (\hat t,\hat x, \hat y, \hat z) > 0.
  \end{equation*}
\end{lem}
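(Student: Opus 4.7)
The plan is to establish both conclusions separately: existence of a maximizer first, then the strict no-action property at that point. The first relies on continuity and coercivity of $H$; the second on a direct comparison that exploits the strict supersolution property of $v^\varepsilon$ already supplied by Lemma~\ref{l-ssup}.

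For existence, I would first note that $H$ is continuous on $[0,T]\times\bar O$, since $u$ and $v$ are continuous by hypothesis and $g$ is smooth. The crucial ingredient is coercivity: the sublinear growth \eqref{eq:sublinear} of $u$ and $v$ in $x$ and in $y$, combined with the linear growth of $\varepsilon g$, forces $H(t,x,y,z)\to -\infty$ as $x+y\to\infty$ in $\bar O$. Since $\mathcal{K}$ is finite, the supremum of $H$ on $[0,T]\times\bar O$ is then attained on a compact subset, yielding a maximizer $(\hat t,\hat x,\hat y,\hat z)$.

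For the no-action property, I would argue by contradiction. Suppose $(u-\mathcal{S}u)(\hat t,\hat\eta)\le 0$. Because $\Gamma(\hat y,\hat z)$ is finite, one can pick $z^*\in\Gamma(\hat y,\hat z)$ attaining the max in $\mathcal{S}u$, and set $\eta^*=(\hat x,\hat y-c(z^*-\hat z),z^*)$, which lies in $\bar O$ by the very definition of $\Gamma$. The hypothesis then reads $u(\hat t,\hat\eta)\le u(\hat t,\eta^*)$, and the maximality of $H$ at $(\hat t,\hat\eta)$, after subtraction, yields
\[
v^\varepsilon(\hat t,\hat\eta)\le v^\varepsilon(\hat t,\eta^*)\le \mathcal{S}v^\varepsilon(\hat t,\hat\eta),
\]
i.e.\ $(v^\varepsilon-\mathcal{S}v^\varepsilon)(\hat t,\hat\eta)\le 0$. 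This contradicts the pointwise strict inequality $v^\varepsilon-\mathcal{S}v^\varepsilon\ge\varepsilon\rho>0$, which I would derive from the sub-distributivity of $\mathcal{S}$ in Proposition~\ref{p-opS}, the direct identity $g-\mathcal{S}g\ge\rho$, and the pointwise supersolution estimate $v\ge\mathcal{S}v$ (a standard consequence of the non-local part of the viscosity definition for continuous $v$, using flat test functions matching $v$ at the base point).

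The step I expect to be most delicate is the coercivity, because \eqref{eq:sublinear} is phrased separately in the $x$- and $y$-directions: the mixed regime $x+y\to\infty$ has to be split into subcases where one coordinate dominates, and one then verifies that the linear penalty $-\varepsilon(x+y)$ overwhelms $u-v$ in each. A mild secondary issue is upgrading the test-function statement \eqref{eq:l-ssup1} of Lemma~\ref{l-ssup} to a genuinely pointwise strict inequality on $v^\varepsilon$; this is routine for a non-local operator like $\mathcal{S}$, but worth spelling out.
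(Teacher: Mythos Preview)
Your proposal is correct and follows essentially the same route as the paper: coercivity of $H$ from the linear penalty $\varepsilon g$ dominating the sublinear $u-v$ gives the maximizer, and the contradiction argument for $(u-\mathcal{S}u)(\hat t,\hat\eta)>0$ via comparison at the shifted point $\eta^*$ is exactly the paper's. The paper invokes \lemref{l-ssup} somewhat loosely to get the pointwise strict inequality $v^\varepsilon>\mathcal{S}v^\varepsilon$, so your remark that this step (upgrading the test-function statement to a pointwise one for the non-local operator) deserves to be spelled out is well taken.
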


\begin{proof} Since $\varepsilon g$
  grows at most linearly and
  $u-v$ has sublinear growth of the form \eqref{eq:sublinear} in
  $(x,y)$, $H$ satisfies $H(t,x,y,z) \to -\infty$ as $|x| +  |y| \to
  \infty$. Thus, $H(\cdot)$ attains its maximum at some point in
  its domain, say $(\hat t, \hat x, \hat y, \hat z)$. To the contrary, if
  $$(u-v)(\hat t,\hat x, \hat y, \hat z)  \le 0.$$
  Then,  $\exists z^*
  \neq \hat z$  such that
  $$u(\hat t,\hat x, \hat y, \hat z)   \le u (\hat t,\hat x, \hat y -
  c(z^* - \hat z),   z^*). $$
  On the other hand, by \propref{p-opS}- monotonicity and
  \lemref{l-ssup},
  $$v^\varepsilon (\hat t, \hat x, \hat y, \hat z) > \mathcal{S}
  v^\varepsilon (\hat t, \hat x, \hat y, \hat z) \ge v^\varepsilon
  (\hat t,\hat x, \hat y - c(z^* - \hat z),   z^*) .$$
  Combining the above two inequalities,
  $$(u-v^\varepsilon) (\hat t, \hat x, \hat y, \hat z) <
  (u-v^\varepsilon) (\hat t,\hat x, \hat y - c(z^* - \hat z),
  z^*),$$ which is a contradiction to
  the definition of $(\hat t, \hat x,
  \hat y, \hat z)$ as a maximizer.
\end{proof}

For any $z\in \mathcal{K}$, define
\begin{equation}
  \label{eq:Omegaz}
  O_z = \{(x,y): (x,y,z) \in O\}.
\end{equation}

\begin{lem}
  \label{l-phia}
  Define $\Phi_\alpha: [0,T] \times
  O_z^2 \times \mathcal{K} \to
  \mathbb{R}$ as
  \begin{equation*}
    \label{eq:l-phia1}
    \Phi_\alpha (t, \zeta, \eta, z) = u(t,\zeta,z) - v^\varepsilon
    (t,\eta,z) - \frac \alpha 2 |\zeta - \eta|^2.
  \end{equation*}
  Then, the following assertions are true:
  \begin{enumerate}
  \item For
    each  $z\in \mathcal{K}$,
    $\Phi_\alpha (\cdot,z)$ achieves its maximum at a point in
    $[0,T]\times \bar O_z^2$,  denoted
    by $(t_\alpha^z, \zeta_\alpha^z, \eta_\alpha^z)$.
  \item
    There exists a convergent
    subsequence $(t_\alpha^z, \zeta_\alpha^z)
    \to (t^z, \zeta^z) \in [0,T] \times O_z$  such that
    \begin{equation*}
      \label{eq:l-phia2}
      H(t^z, \zeta^z,z) = \sup_{[0,T] \times \bar O_z} H(t,\zeta,z),
    \end{equation*}
    and
    \begin{equation*}
      \label{eq:l-phia3}
      \lim_{\alpha\to \infty} \alpha |\zeta_\alpha^z -
      \eta_\alpha^z|^2 \to 0.
    \end{equation*}
  \end{enumerate}
\end{lem}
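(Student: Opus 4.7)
The plan is to apply the classical doubling-of-variables technique from viscosity solution theory. Throughout I fix $z \in \mathcal{K}$, since $\Phi_\alpha$ decouples across the discrete component.

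\textbf{Part (1).} First I would verify that $\Phi_\alpha(\cdot,\cdot,\cdot,z)$ is continuous (inherited from continuity of $u$, $v$, and the smooth $g$), and then show that it tends to $-\infty$ as $|\zeta| + |\eta| \to \infty$ with $\alpha$ fixed. This uses three ingredients: the sublinear growth \eqref{eq:sublinear} of $u$ in $\zeta$, the \emph{linear} growth of $\varepsilon g(t,\eta,z) = \varepsilon(x_\eta + y_\eta + C_5(T-t))$ embedded inside $-v^\varepsilon$, and the quadratic penalty $-\tfrac{\alpha}{2}|\zeta - \eta|^2$. If $|\eta|$ stays bounded while $|\zeta| \to \infty$, the quadratic penalty dominates the sublinear growth of $u$; if $|\eta| \to \infty$, the linear $-\varepsilon g$ term dominates; a combination argument covers the general case. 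Upper semicontinuity together with coercivity at infinity yields an attained maximizer $(t_\alpha^z, \zeta_\alpha^z, \eta_\alpha^z) \in [0,T] \times \bar O_z^2$.

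\textbf{Part (2).} Set $M := \sup_{[0,T] \times \bar O_z} H(\cdot, z)$ (attained by \lemref{l-funH} restricted to the fiber at $z$) and $M_\alpha := \Phi_\alpha(t_\alpha^z, \zeta_\alpha^z, \eta_\alpha^z)$. Restricting $\Phi_\alpha$ to the diagonal $\eta = \zeta$ gives $\Phi_\alpha(t,\zeta,\zeta,z) = H(t,\zeta,z)$, so $M_\alpha \ge M$, and $M_\alpha$ is monotone nonincreasing in $\alpha$. The crucial step is uniform compactness of the maximizers: from $M_\alpha \ge M$ one obtains
\begin{equation*}
u(t_\alpha^z, \zeta_\alpha^z, z) - v(t_\alpha^z, \eta_\alpha^z, z) - \varepsilon g(t_\alpha^z, \eta_\alpha^z, z) \ge M + \tfrac{\alpha}{2}|\zeta_\alpha^z - \eta_\alpha^z|^2,
\end{equation*}
in which the strictly linear-in-$\eta$ term $\varepsilon g$ dominates the sublinear growth of $u - v$, forcing $|\eta_\alpha^z|$ uniformly bounded; the penalty then forces $|\zeta_\alpha^z|$ uniformly bounded as well. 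Extracting a convergent subsequence gives $(t_\alpha^z, \zeta_\alpha^z, \eta_\alpha^z) \to (t^z, \zeta^z, \eta^z)$, and the bound $\tfrac{\alpha}{2}|\zeta_\alpha^z - \eta_\alpha^z|^2 \le M_\alpha - M + v^\varepsilon(t_\alpha^z, \eta_\alpha^z, z) - u(t_\alpha^z, \zeta_\alpha^z, z) + $ constants (uniformly bounded on the compact region) yields $|\zeta_\alpha^z - \eta_\alpha^z| \to 0$, so $\eta^z = \zeta^z$.

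To conclude, I pass to the limit in $M \le \Phi_\alpha(t_\alpha^z, \zeta_\alpha^z, \eta_\alpha^z)$. Continuity of $u, v^\varepsilon$ and $\eta^z = \zeta^z$ yield
\begin{equation*}
M \le H(t^z, \zeta^z, z) - \liminf_{\alpha \to \infty} \tfrac{\alpha}{2}|\zeta_\alpha^z - \eta_\alpha^z|^2 \le H(t^z, \zeta^z, z) \le M,
\end{equation*}
whence all quantities equal $M$ and $\alpha|\zeta_\alpha^z - \eta_\alpha^z|^2 \to 0$. To pin $(t^z, \zeta^z)$ in the interior $O_z$ (not merely its closure), I invoke the boundary-terminal condition \eqref{eq:btc}: on $\partial^*([0,T) \times O)$ the subsolution inequality gives $u \le U(y - c(-z))$ and the supersolution inequality gives $v \ge U(y - c(-z))$, so $H = u - v - \varepsilon g \le -\varepsilon g < 0$ there; since the comparison argument is only nontrivial when $M > 0$, this excludes boundary maximizers.

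\textbf{Main obstacle.} The delicate point is the uniform (in $\alpha$) compactness of the maximizers, because neither $u$ nor $-v$ is individually coercive. One must exploit the strict linear growth of $\varepsilon g$ built into $v^\varepsilon$ in combination with the sublinear hypothesis \eqref{eq:sublinear} to confine both $\zeta_\alpha^z$ and $\eta_\alpha^z$ to a bounded set uniformly in $\alpha$; without this control, one cannot extract a limit nor conclude that $(t^z, \zeta^z)$ maximizes $H$.
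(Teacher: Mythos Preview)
Your proposal is correct and follows the same route as the paper: the paper's proof consists of the one-line observation that $\Phi_\alpha(\cdot,z) \to -\infty$ as $|\zeta|+|\eta|\to\infty$ (so a maximum is attained) and then a direct citation of \cite[Lemma 3.1]{CIL92}; what you have written is precisely a detailed unpacking of that lemma in the present setting, including the uniform-in-$\alpha$ compactness of maximizers, the diagonal inequality $M_\alpha \ge M$, and the squeeze argument giving $\alpha|\zeta_\alpha^z-\eta_\alpha^z|^2\to 0$. The only organizational difference is that your final paragraph on interiority via the boundary condition \eqref{eq:btc} is not part of the paper's proof of this lemma---the paper defers that point to the proof of \thmref{t-cpr}, where the contradiction hypothesis $\sup_{[0,T]\times\bar O} H > \sup_{\partial^*} H$ is what forces the maximizer into the interior.
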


\begin{proof} Note that
  $\Phi_\alpha (\cdot,z) $ achieves maximum, since
  $\Phi_\alpha(t,\zeta,\eta,z) \to -\infty$ as $|\zeta| + |\eta| \to
  \infty$. The rest of proof is an application of \cite[Lemma
  3.1]{CIL92} on function $\Phi_\alpha(\cdot, \cdot, \cdot, z)$.
\end{proof}

\begin{thm}[Comparison result]\label{t-cpr}
  $$\sup_{[0,T]\times \bar O} (u-v) (t, x,y,z) = \sup_{\partial^*
    ([0,T)\times  O)} (u-v)(t, x,y,z).$$
\end{thm}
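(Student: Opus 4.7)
The plan is to argue by contradiction using a doubling-of-variables viscosity comparison adapted to the QVI structure. Suppose $M := \sup_{[0,T]\times\bar O}(u-v) > \sup_{\partial^*([0,T)\times O)}(u-v)$. For $\varepsilon>0$ small enough, the linear growth of $g$ together with the sublinear growth \eqref{eq:sublinear} of $u,v$ ensures that $H = u - v^\varepsilon$ still has supremum over $[0,T]\times\bar O$ strictly above $\sup_{\partial^*} H$. By \lemref{l-funH}, this supremum is attained at some $(\hat t,\hat x,\hat y,\hat z) \in [0,T]\times\bar O$; the strict gap forces the maximizer into the interior $[0,T)\times O$, and the same lemma gives $(u-\mathcal{S}u)(\hat t,\hat x,\hat y,\hat z)>0$, an inequality that persists in a neighborhood by continuity of $u$ and of $\mathcal{S}u$ (\propref{p-opS}(3)).

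Freezing $z=\hat z$, I would apply \lemref{l-phia} to $\Phi_\alpha$ to obtain maximizers $(t_\alpha,\zeta_\alpha,\eta_\alpha)$ with $(t_\alpha,\zeta_\alpha,\eta_\alpha)\to(\hat t,(\hat x,\hat y),(\hat x,\hat y))$ and $\alpha|\zeta_\alpha-\eta_\alpha|^2\to 0$. The parabolic Crandall-Ishii lemma (\cite[Theorem~8.3]{CIL92}) then supplies a common $q_\alpha\in\mathbb{R}$, $p_\alpha=\alpha(\zeta_\alpha-\eta_\alpha)\in\mathbb{R}^2$, and symmetric $2\times 2$ matrices $A_\alpha,B_\alpha$ in the parabolic superjet of $u(\cdot,\hat z)$ at $(t_\alpha,\zeta_\alpha)$ and the parabolic subjet of $v^\varepsilon(\cdot,\hat z)$ at $(t_\alpha,\eta_\alpha)$ respectively, subject to
\[
\begin{pmatrix} A_\alpha & 0 \\ 0 & -B_\alpha \end{pmatrix} \le 3\alpha \begin{pmatrix} I & -I \\ -I & I \end{pmatrix}.
\]
Since $(u-\mathcal{S}u)(t_\alpha,\zeta_\alpha,\hat z)>0$ for large $\alpha$, a careful smooth extension of the Ishii test function to the slices $\tilde z\neq \hat z$ (dominating $u$ globally while keeping $\mathcal{S}\varphi$ close to $\mathcal{S}u$ near the contact point, using continuity of $u$) forces the subsolution inequality into its PDE branch $F(t_\alpha,\zeta_\alpha,\hat z,q_\alpha,p_\alpha,A_\alpha)\le 0$, while \lemref{l-ssup} gives the strict supersolution inequality $F(t_\alpha,\eta_\alpha,\hat z,q_\alpha,p_\alpha,B_\alpha)\ge \varepsilon\rho$.

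Subtracting, the $q_\alpha$ terms cancel. Writing $\zeta_\alpha=(\zeta_\alpha^x,\zeta_\alpha^y)$ and $\eta_\alpha=(\eta_\alpha^x,\eta_\alpha^y)$, the Lipschitz bound in \eqref{eq:asmbs} together with $p_\alpha=\alpha(\zeta_\alpha-\eta_\alpha)$ yields
\[
|(b(t_\alpha,\zeta_\alpha^x)-b(t_\alpha,\eta_\alpha^x))((p_\alpha)_1+\hat z(p_\alpha)_2)| \le C_1(1+|\hat z|)\alpha|\zeta_\alpha-\eta_\alpha|^2 \to 0.
\]
For the diffusion part, testing the matrix inequality against $(\Xi(t_\alpha,\zeta_\alpha^x),\Xi(t_\alpha,\eta_\alpha^x))$ with $\Xi(t,x):=(\sigma(t,x),\hat z\sigma(t,x))^\top$ gives
\[
\tfrac12 \Xi^\top(t_\alpha,\zeta_\alpha^x) A_\alpha \Xi(t_\alpha,\zeta_\alpha^x) - \tfrac12 \Xi^\top(t_\alpha,\eta_\alpha^x) B_\alpha \Xi(t_\alpha,\eta_\alpha^x) \le \tfrac32\alpha|\Xi(t_\alpha,\zeta_\alpha^x)-\Xi(t_\alpha,\eta_\alpha^x)|^2 \le C\alpha|\zeta_\alpha-\eta_\alpha|^2 \to 0.
\]
Combining, $0 \ge \varepsilon\rho + o(1)$ as $\alpha\to\infty$, a contradiction. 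Letting $\varepsilon\to 0$ then yields the claim.

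The hard part will be the QVI-specific step of reducing the subsolution inequality to its PDE branch at the doubled maximizer. This depends on constructing the Ishii test function so that $\varphi-\mathcal{S}\varphi>0$ is preserved at the contact point despite $\mathcal{S}$ involving other $\tilde z$-slices; it rests on the continuity of $u,v$ (\thmref{t-vcon}) and on the structural properties of $\mathcal{S}$ in \propref{p-opS}, with the openness statement in \propref{p-nonaction} as the conceptual underpinning. Disposal of the boundary and terminal case is automatic: the contradiction hypothesis combined with \lemref{l-funH} drives the maximizer into the interior, while on $\partial^*$ the boundary-terminal data \eqref{eq:btc} forces $u\le v$ directly.
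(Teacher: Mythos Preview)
Your proposal is correct and follows essentially the same route as the paper: contradiction via \lemref{l-funH} to locate an interior maximizer with $(u-\mathcal{S}u)>0$, doubling of variables through \lemref{l-phia}, Ishii's lemma, the strict supersolution bound from \lemref{l-ssup}, and the standard Lipschitz/matrix estimate to force $\varepsilon\rho\le o(1)$. The only cosmetic difference is in how the subsolution inequality is pushed into its PDE branch: the paper writes the superjet form directly as $\min\{F(\hat t_\alpha,\hat\zeta_\alpha,\hat z,q,p,A),\,(u-\mathcal{S}u)(\hat t_\alpha,\hat\zeta_\alpha,\hat z)\}\le 0$ and invokes the openness of $\mathcal{A}[u](\hat z)$ from \propref{p-nonaction}, whereas you spell out the underlying justification via a smooth extension of the test function to the other $\tilde z$-slices---which is precisely what legitimizes the paper's superjet formulation with $u-\mathcal{S}u$ rather than $\varphi-\mathcal{S}\varphi$.
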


\begin{proof}
  It suffices
  to show that for an arbitrary given $\varepsilon$,
  \begin{equation*}
    \label{eq:t-cpr1}
    \sup_{[0,T]\times \bar O} H(t, x,y,z) =
    \sup_{\partial^* ([0,T)\times O)} H(t,x,y,z).
  \end{equation*}
  To the contrary, we assume $$H(\hat t, \hat x, \hat y, \hat z) =
  \sup_{[0,T]\times \bar O}
  H(t,x,y,z)>\sup_{\partial^* ([0,T)\times O)} H(t,x,y,z)$$ for some
  $\varepsilon>0$.
  Then, we have
  $(\hat t, \hat x, \hat y, \hat z) \in [0,T)\times O$ at the
  interior   of the domain.

  For notational convenience, we denote $(\hat t_\alpha, \hat
  \zeta_\alpha, \hat \eta_\alpha) = (t_\alpha^{\hat z}, \zeta_\alpha^{\hat z},
  \eta_\alpha^{\hat z})$. Also, we note that $(\hat t, \hat \zeta) =
  (t^{\hat z}, \zeta^{\hat z}) = \lim_{\alpha \to \infty} (\hat
  t_\alpha, \hat \zeta_\alpha)$ in \lemref{l-phia}. By
  \lemref{l-phia}, we have
  $\Phi_\alpha  (\hat t_\alpha, \hat \zeta_\alpha, \hat \eta_\alpha,
  \hat z) \to H
  (\hat t, \hat \zeta, \hat z)$ as $\alpha \to \infty$.

  By \lemref{l-funH}, $(\hat t, \hat \zeta) \in \mathcal{A}[u](\hat
  z).$ Since  $\mathcal{A}[u](\hat z)$ is open by
  \propref{p-nonaction},  there
  exists some
  $\alpha_1>0$ such that $(\hat t_\alpha, \hat \zeta_\alpha), (\hat
  t_\alpha, \hat \eta_\alpha) \in \mathcal{A}[u](\hat z) \subset [0,T)\times
  O_{\hat z} $ for all $\alpha>\alpha_1$.
  To proceed, we denote parabolic superjet (resp. subjet) by
  $D^{+(1,2)}$ (resp. $D^{-(1,2)}$), and its closure by $\bar D^{+(1,2)}$
  (resp. $\bar D^{-(1,2)}$); see its definition and properties in
  \cite{CIL92}.  Applying Ishii's lemma (also in \cite{CIL92}) on
  $u(\cdot,{\hat z}),
  v^{\varepsilon}(\cdot, \hat z)$,  and $w^\alpha(t,\zeta,\eta) =
  \frac \alpha 2 |\zeta - \eta|^2$, there exists $q, \tilde q \in
  \mathbb{R}$, $p, \tilde p \in \mathbb{R}^2$ and  symmetric matrices
  $A, B$ depending on $\alpha$, such that
  \begin{enumerate}
  \item $(q, p, A) \in \bar D^{+(1,2)} u(\hat t_\alpha, \hat
    \zeta_\alpha, {\hat z})$, $p = D_{\zeta} w^\alpha(\hat t_\alpha,
    \hat \zeta_\alpha, \hat \eta_\alpha) = \alpha (\hat \zeta_\alpha -
    \hat \eta_\alpha)$;
  \item $(\tilde q, \tilde p, \tilde A) \in \bar D^{-(1,2)}
    v^{\varepsilon}(\hat t_\alpha, \hat \eta_\alpha, \hat z)$, $\tilde
    p = -D_{\eta} w^\alpha(\hat t_\alpha, \hat \zeta_\alpha, \hat
    \eta_\alpha) = \alpha (\hat \zeta_\alpha - \hat \eta_\alpha)$;
  \item $q - \tilde q = 0$;
  \item $-3\alpha I_4 \le \left[
      \begin{array}{ll}
        A & 0 \\ 0 & -\tilde A
      \end{array} \right] \le
    3\alpha \left[
      \begin{array}{ll} I_2 & -I_2 \\ -I_2 & I_2
      \end{array} \right].$
  \end{enumerate}
  By viscosity subsolution property of $u$, it yields
  $$\min\{ F(\hat t_\alpha, \hat \zeta_\alpha, \hat z, q, p, A), (u -
  \mathcal{S} u)(t_\alpha, \hat \zeta_\alpha, \hat z) \}  \le  0.$$
  Since $(\hat t_\alpha, \hat \zeta_\alpha) \in  \mathcal{A}[u](\hat
  z)$ and $\mathcal{A}[u](\hat z)$ is open
  \begin{equation}
    \label{eq:t-cpr2}
    F(\hat t_\alpha, \hat \zeta_\alpha, \hat z, q, p, A) \le 0.
  \end{equation}
  Also by \eqref{eq:l-ssup2},
  \begin{equation}
    \label{eq:t-cpr3}
    F(\hat t_\alpha, \hat \eta_\alpha, \hat z, \tilde q, \tilde p,
    \tilde A) >\varepsilon \rho>0.
  \end{equation}
  Using the result of \lemref{l-phia}, Lipschitz condition on $b$ and
  $\sigma$, and
  Ishii's lemma, subtracting
  \eqref{eq:t-cpr3} from \eqref{eq:t-cpr2}
  \begin{equation*}
    \varepsilon \rho < F(\hat t_\alpha, \hat \eta_\alpha, \hat z,
    \tilde q, \tilde p, \tilde A) - F(\hat t_\alpha, \hat
    \zeta_\alpha, \hat z, q, p, A) \to 0
    \hbox{ as } \alpha \to \infty,
  \end{equation*}
  which leads to a contradiction.
\end{proof}

\subsection{Summary of Results}
Finally, we summarize what have been obtained so far. It is
presented in the
following characterization of the value function.

\begin{thm}
  \label{t-characterization}
  Given \asmref{a-general}, \asmref{a-con} and \asmref{a-bsbd},  the
  value function
  $V(\cdot, \cdot, \cdot, \cdot)$ of \eqref{eq:obj} is
  the unique
  viscosity solution of the quasi-variational inequality
  \eqref{eq:qvi}-\eqref{eq:btc} in the space of continuous functions
  with sublinear growth in $(x,y)$ of the form \eqref{eq:sublinear}.
\end{thm}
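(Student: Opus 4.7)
The plan is to assemble the characterization by combining the continuity, viscosity solution, and comparison results already established, after verifying that $V$ sits in the correct function class so the comparison theorem applies. First I would invoke \thmref{t-vcon} to assert that $V$ is continuous on $[0,T]\times \bar O$ (with the standing \asmref{a-con}), and then \thmref{t-existence} to conclude that $V$ is a viscosity solution of \eqref{eq:qvi}--\eqref{eq:btc}. The existence half of the characterization is then in hand.

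Next I would verify that $V$ satisfies the sublinear growth condition \eqref{eq:sublinear}, which is the hypothesis needed to apply the comparison principle \thmref{t-cpr}. This is essentially free: since $\Lambda^\varepsilon\equiv 1$ on $\{\tau=T\}$ we have $V(t,x,y,z)\le V^\varepsilon(t,x,y,z)$ pointwise, and the estimates \eqref{eq:l-pve1}--\eqref{eq:l-pve1-1} of \lemref{l-pve} (which are uniform in $\varepsilon$) then transfer to $V$, giving
\begin{equation*}
\lim_{x\to\infty}\sup_{t}\frac{V(t,x,y,z)}{x}=0,\qquad \lim_{y\to\infty}\sup_{t}\frac{V(t,x,y,z)}{y}=0.
\end{equation*}
Thus $V$ lies in the space of continuous, sublinearly growing functions referenced in \thmref{t-cpr}.

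For uniqueness, suppose $u$ is any other continuous viscosity solution with sublinear growth. Both $u$ and $V$ satisfy the Cauchy--Dirichlet data \eqref{eq:btc}, so
\begin{equation*}
(u-V)(t,x,y,z)=0\quad\text{on } \partial^*([0,T)\times O).
\end{equation*}
Since $u$ is a viscosity subsolution and $V$ is a viscosity supersolution, \thmref{t-cpr} yields
\begin{equation*}
\sup_{[0,T]\times\bar O}(u-V)=\sup_{\partial^*([0,T)\times O)}(u-V)=0,
\end{equation*}
so $u\le V$. Swapping the roles of $u$ and $V$ (using that $V$ is also a subsolution and $u$ a supersolution) gives $V\le u$, hence $u\equiv V$.

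The only subtle point is ensuring the hypotheses of \thmref{t-cpr} are genuinely met; I expect the main obstacle not to be any deep argument but rather checking that the penalty/growth bounds established for $V^\varepsilon$ pass cleanly to $V$ and that the boundary-terminal data \eqref{eq:btc} is inherited by $V$ in the viscosity sense. The latter is handled inside \thmref{t-existence}, and the former by the monotonicity $V\le V^\varepsilon$ together with \lemref{l-pve}, so no further work beyond citing the appropriate results is required.
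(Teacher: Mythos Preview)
Your proposal is correct and matches the paper's own treatment: the paper presents \thmref{t-characterization} explicitly as a summary of previously established results (\thmref{t-vcon}, \thmref{t-existence}, \thmref{t-cpr}) without a separate proof, and you have correctly assembled those pieces, including the verification of sublinear growth via $V\le V^\varepsilon$ and \lemref{l-pve}.
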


\section{Further Remark}\label{sec:rem}
In this work, we obtained the continuity of the value function,
and further characterized the value function as the unique
viscosity solution of a quasi-variational inequality with
Cauchy-Dirichlet condition on $\partial^* ([0,T) \times O)$ under
some appropriate assumptions.

We have emphasized the continuity result in the current work. As a
future study, we will consider viable uniqueness proofs with
boundary conditions only on the effective boundary $\partial^*
([0,T) \times O) \cap \{x>0\}$. The other consideration is to show
the uniqueness without \asmref{a-bsbd}. One possible approach is
to use domain transformation defined by $\bar x = \ln x$, and
adjust the operators $\mathcal{L}$ and $\mathcal{S}$
appropriately, which is not included in the current paper due to
notational complexity.

Another possible extension of the current work is to consider
transaction cost of the form $c(x,z)$, with subadditive condition in
$z$. More discussions are referred to \cite{MSXZ08a}. It might also
be
interesting to study regime-switching models under optimal
switching framework.

It is straightforward to generalize all the results to nonzero
fixed risk-free rate $r>0$ by usual normalization. However, it is
nontrivial to consider similar utility maximization problems under
various stochastic interest rate models.

\section{Appendix}\label{sec:app}

Next, for the sake of  completeness, we show the sample path results
on 1-D It\^o's process. \propref{p-samplepath} is a generalized version
of \cite{BSY09}, and \propref{p-2} is a special case of
\propref{p-samplepath}, which is needed in the proof of \lemref{l-as}.

Consider the oem-dimensional It\^o process
\begin{equation}
  \label{eq:itoX}
  X(t,\omega) = \int_0^t b(s,\omega) ds + \sigma(s,\omega) d W(s),
\end{equation} where we assume \begin{equation}
  \label{eq:asmbs1}
  b(\cdot, \omega) \in L^1([0,T]), \sigma(\cdot, \omega) \in
  L^2([0,T]), \ \mathbb{P}-a.s.
\end{equation}
The
It\^o process \eqref{eq:itoX}
is well defined
under assumption \eqref{eq:asmbs1}; see Definition 4.1.1 of
\cite{Oks03} and Problem 4.11 of \cite{KS91}.

Define the stopping times
\begin{equation}
  \label{eq:eta}
  \eta(\omega) \triangleq \inf\{t>0: \int_0^t \sigma^2(s, \omega) ds
  >0 \},
\end{equation}
and
\begin{equation*}
  \label{eq:tau1}
  \tau(\omega) \triangleq \inf\{t>0: X(t) >0 \}.
\end{equation*}

\begin{prop}
  \label{p-samplepath}
  Assume \eqref{eq:asmbs1} holds. For
  the It\^o process given by \eqref{eq:itoX},
  $\tau = 0$ $\mathbb{P}$-a.s. if
  one of the following two conditions is satisfied:
  \begin{enumerate}
  \item There exists stopping time  $\theta >0$ such that
    \begin{equation}
      \label{eq:theta}
      \int_0^t b(s, \omega) ds >0,  \ \forall t< \theta(\omega);
    \end{equation}
  \item
    $\eta = 0$, and there exists measurable function $\psi$ satisfying
    \begin{equation}
      \label{eq:psi}
      b = \sigma \psi, \ \hbox{ and }  \mathbb{E} \Big[\exp \Big\{
      \frac 1 2 \int_0^T
      \|\psi\|^2 ds \Big\} \Big] < \infty, \ \hbox{ for some } T>0.
    \end{equation}
  \end{enumerate}
\end{prop}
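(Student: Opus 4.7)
The plan is to write $X(t) = A(t) + M(t)$ with $A(t) = \int_0^t b(s,\omega)\,ds$ and $M(t) = \int_0^t \sigma(s,\omega)\,dW(s)$, and to represent the continuous local martingale $M$ via Dambis--Dubins--Schwarz: on a (possibly enlarged) probability space there is a standard Brownian motion $B$ such that $M(t) = B(C(t))$, where $C(t) = \langle M\rangle_t = \int_0^t \sigma^2(s,\omega)\,ds$. Both cases then reduce to establishing oscillation properties of $B$ (or of an equivalent Brownian motion under a changed measure) in every right-neighborhood of the origin, combined with Blumenthal's $0$-$1$ law. Note that $\eta>0$ is equivalent to $C(t)=0$ on $[0,\eta]$ (hence $M\equiv 0$ there), while $\eta=0$ is equivalent to $C(t)>0$ for every $t>0$.

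For Case (i), I would split on whether $\eta>0$ or $\eta=0$. On $\{\eta>0\}$, for $0<t<\eta\wedge\theta$ we have $C(t)=0$ and therefore $M(t)=0$, so $X(t)=A(t)>0$ by \eqref{eq:theta}; this alone forces $\tau=0$. On $\{\eta=0\}$ I would proceed by contradiction: if $\tau(\omega)>0$, then $X(t)\le 0$ on $(0,\tau)$, so $M(t)\le -A(t)<0$ for $t\in(0,\tau\wedge\theta)$; since $C$ is continuous, nondecreasing, with $C(0)=0$ and $C(t)>0$ for $t>0$, the image $C\bigl((0,\tau\wedge\theta)\bigr)$ contains a (random) right-neighborhood $(0,c_0)$ of the origin, which forces $B(s)<0$ for all $s\in(0,c_0)$. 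But the event $\bigcup_{n\ge 1}\{B(s)<0\ \forall s\in(0,1/n)\}$ lies in the germ $\sigma$-field of $B$ at $0$ and has probability $0$ by Blumenthal's $0$-$1$ law, giving the contradiction.

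For Case (ii), I would use Girsanov to eliminate the drift. Novikov's condition \eqref{eq:psi} guarantees that $Z_T=\exp\bigl\{-\int_0^T\psi\,dW-\tfrac{1}{2}\int_0^T\psi^2\,ds\bigr\}$ is a true $\mathbb{P}$-martingale, so $d\tilde{\mathbb{P}}=Z_T\,d\mathbb{P}$ defines a probability measure equivalent to $\mathbb{P}$ on $\mathcal{F}_T$. Under $\tilde{\mathbb{P}}$ the process $\tilde W(t)=W(t)+\int_0^t\psi\,ds$ is a Brownian motion and $X(t)=\int_0^t\sigma\,d\tilde W$ is a continuous $\tilde{\mathbb{P}}$-local martingale with quadratic variation $C$. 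DDS under $\tilde{\mathbb{P}}$ yields $X(t)=\tilde B(C(t))$ for a $\tilde{\mathbb{P}}$-Brownian motion $\tilde B$. Since $\eta=0$ implies $C(t)>0$ for $t>0$, and Blumenthal's $0$-$1$ law forces $\tilde B$ to take strictly positive values in every right-neighborhood of $0$, $\tau=0$ holds $\tilde{\mathbb{P}}$-a.s., and therefore $\mathbb{P}$-a.s. by equivalence of measures.

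The main obstacle I expect is the careful handling of the DDS time change when $\sigma$ is a random process that may vanish on sets of positive Lebesgue measure: one needs the version of the representation theorem that allows enlargement of the probability space (e.g., Karatzas--Shreve, Theorem 3.4.6) and must verify that Blumenthal's $0$-$1$ law is applied to $B$ (resp.\ $\tilde B$) in the correct filtration on that extension. A secondary delicate point in Case (i) is that $\tau\wedge\theta$ is random, so the extracted right-neighborhood $(0,c_0)$ is random; exchanging the quantifiers via the union $\bigcup_{n}\{B(s)<0\ \forall s\in(0,1/n)\}$ is what removes this issue.
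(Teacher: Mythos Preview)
Your proposal is correct and follows essentially the same route as the paper: a DDS time-change of the martingale part combined with the fact that Brownian motion takes positive values in every right-neighborhood of the origin, a split on $\{\eta>0\}$ versus $\{\eta=0\}$ in Case~(i), and Girsanov in Case~(ii). The only cosmetic differences are that the paper avoids enlarging the probability space by extending $\sigma$ to equal $1$ after the stopping time $\eta_\varepsilon=\inf\{t:\int_0^t\sigma^2\,ds>\varepsilon\}$ (so that $\langle M\rangle_\infty=\infty$ and the standard DDS theorem applies directly), and in Case~(i) on $\{\eta=0\}$ it argues directly via $X(t)\ge M(t)$ for $t<\theta$ rather than by your contradiction.
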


\begin{proof} We divide the proof into three steps:
  \begin{enumerate}
  \item  Suppose $b \equiv 0$ and $\eta = 0$ almost
    surely. Define $$\eta_\varepsilon \triangleq \inf\{t>0: \int_0^t
    \sigma^2 (s, \omega) ds > \varepsilon \}.$$ Then,
    $\eta_\varepsilon \to 0$ as $\varepsilon \to 0$
    $\mathbb{P}$-a.s. Set $X_\varepsilon(t) = \int_0^t
    \sigma_\varepsilon(s,\omega) dW(s)$, where
    $$\sigma_\varepsilon(s, \omega) = \left\{
      \begin{array}{ll}
        \sigma(s, \omega), & s \le \eta_\varepsilon(\omega) \\
        1, & s> \eta_\varepsilon(\omega).
      \end{array}\right.
    $$ Then,
    the quadratic variation
    $\langle X_\varepsilon \rangle (t) \to \infty$ as $t\to
    \infty$, and $X_\varepsilon (s, \omega) = X(s, \omega)$ for all
    $s < \eta_\varepsilon(\omega)$.

    Let $T_\varepsilon(s) = \inf\{t>0: \langle X_\varepsilon \rangle
    (t)>s\}$.
    By Theorem 4.6 of \cite{KS91}, $B(s) \triangleq
    X_\varepsilon(T_\varepsilon(s))$ is a Brownian motion under
    $\mathbb{P}$ with time-changed filtration.

    Note that $T_\varepsilon (\varepsilon) = \eta_\varepsilon$ by
    definition, and hence $B(\varepsilon) = X_\varepsilon
    (T_\varepsilon(\varepsilon)) = X_\varepsilon (\eta_\varepsilon)$
    and
    \begin{equation*}
      \sup_{0\le t\le \eta_\varepsilon} X(t, \omega) \equiv
      \sup_{0\le t\le \eta_\varepsilon} X_\varepsilon (t, \omega) =
      \sup_{0\le t \le \varepsilon} B(t,\omega) >0, \
      \mathbb{P}-a.s. \ \forall \varepsilon>0.
    \end{equation*}
    Therefore, we obtain
    \bea
    0\ad \le \inf\{t>0: X(t, \omega) >0 \}
    \\
    \ad \le \inf\{s>0:
    \sup_{0\le s\le t}  X(t, \omega) >0 \} \le \eta_\varepsilon \to
    0, \ \varepsilon \to 0, \ \mathbb{P}-a.s.
    \eea
    and this implies $\tau = 0$ if $b \equiv 0$ and $\eta = 0$.
  \item
    Now we assume existence of $\theta > 0$ satisfying
    \eqref{eq:theta}. Set $A = \{ \omega: \eta(\omega) >0 \}$,
    then
    $$X(t,\omega) = \int_0^t b(s, \omega) ds, \ \forall s<
    \eta(\omega), \ \mathbb{P}-a.s. \ \omega \in A,$$
    and thus $\tau = 0$ $\mathbb{P}$-a.s. $\omega \in A$. For  $\omega
    \in A^c$, we have following inequality
    $$X(t, \omega) \ge \int_0^t \sigma(s,\omega) dW(s), \ \forall t<
    \theta(\omega),$$
    and by the proof of
    the first part, since $\eta = 0$ $\mathbb{P}$-a.s. in
    $A^c$,
    $$\sup_{0\le s \le t} X(s, \omega) \ge \sup_{0\le s \le t}
    \int_0^s \sigma(r,\omega) dW(r) >0 , \ \forall t<
    \theta(\omega).$$
    This implies $\tau = 0$, $\mathbb{P}$-a.s. in $A^c$.
    Thus, $\tau = 0$ with existence of $\theta >0$ of
    \eqref{eq:theta}.
  \item
    With the assumption of \eqref{eq:psi}, by Girsanov theorem, $X(t)
    = \int_0^t \sigma (s, \omega) d \widetilde W^Q$, where $Q \sim
    \mathbb{P}$. Then we can  apply the result of first part  to obtain
    $\tau = 0$ $Q$-a.s., and hence $\mathbb{P}$-a.s.
  \end{enumerate}
\end{proof}

\begin{exm}
  \label{e-1}  {\rm
    Suppose $X(t) = -100 t + W(t)$, then $\tau = 0$. This can be seen from
    (2) of \propref{p-samplepath}.}
\end{exm}

\begin{prop}
  \label{p-2}
  Assuming that $\mathcal{F}_0$ is trivial $\sigma$-algebra.
  Consider It\^o process \eqref{eq:itoX} with continuous $b(\cdot, \omega)$
  and $\sigma(\cdot,\omega)$. Then, $\tau = 0$, if either $b(0)>0$ or
  $\sigma(0)>0$.
\end{prop}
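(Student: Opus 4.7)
The plan is to derive \propref{p-2} as a consequence of \propref{p-samplepath}, handling the two hypotheses with its two conditions respectively. Since $\mathcal{F}_0$ is trivial, $b(0,\omega)$ and $\sigma(0,\omega)$ are $\mathbb{P}$-a.s. constants, so the hypotheses $b(0)>0$ and $\sigma(0)>0$ make unambiguous sense. Continuity of $b(\cdot,\omega)$ and $\sigma(\cdot,\omega)$ will let me propagate positivity from time $0$ to a small (random) stopping-time interval $[0,\theta]$.

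For the case $b(0) > 0$, the argument is immediate. Set $\theta(\omega) = \inf\{t > 0 : b(t,\omega) \le b(0)/2\}$, which is a stopping time with $\theta > 0$ $\mathbb{P}$-a.s.\ by continuity of $b(\cdot,\omega)$. On $[0,\theta]$ we have $b \ge b(0)/2 > 0$, hence
$$\int_0^t b(s,\omega)\, ds \;\ge\; \tfrac{b(0)}{2} t \;>\; 0, \quad \forall\, t \in (0,\theta(\omega)],$$
which is precisely \eqref{eq:theta}. Condition (1) of \propref{p-samplepath} then yields $\tau = 0$ $\mathbb{P}$-a.s.

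The case $\sigma(0) > 0$ is the delicate one, because condition (2) of \propref{p-samplepath} demands a \emph{global} Novikov-type exponential integrability $\mathbb{E}[\exp\{\tfrac{1}{2}\int_0^T\|\psi\|^2\,ds\}]<\infty$, whereas $\sigma(0)>0$ only tells us that $\sigma$ stays positive near $0$. My plan is to localize: put
$$\theta(\omega) = \inf\{t>0 : \sigma(t,\omega) \le \sigma(0)/2 \ \text{or}\ |b(t,\omega)| \ge |b(0)|+1\},$$
which is a stopping time with $\theta > 0$ a.s., and define modified coefficients $\tilde b(t) = b(t)\,\mathbf{1}_{\{t\le\theta\}}$ and $\tilde \sigma(t) = \sigma(t)\,\mathbf{1}_{\{t\le\theta\}} + (\sigma(0)/2)\,\mathbf{1}_{\{t>\theta\}}$, together with the associated It\^o process $\tilde X(t) = \int_0^t \tilde b\, ds + \int_0^t \tilde \sigma\, dW$. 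By construction $\tilde\sigma \ge \sigma(0)/2$ on $[0,T]$, so $\tilde\eta = 0$, and $\tilde\psi \triangleq \tilde b/\tilde\sigma$ is bounded $\mathbb{P}$-a.s.\ by the \emph{deterministic} constant $2(|b(0)|+1)/\sigma(0)$ (this is the step where triviality of $\mathcal{F}_0$ is essential, making $b(0),\sigma(0)$ nonrandom). Hence the exponential integrability \eqref{eq:psi} holds trivially, and condition (2) of \propref{p-samplepath} applies to $\tilde X$, giving $\tilde\tau = 0$ $\mathbb{P}$-a.s.

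To conclude, observe that $\tilde X(t) = X(t)$ for all $t \le \theta$. Since $\tilde\tau = 0$, for every $\varepsilon > 0$ there exists $t \in (0,\varepsilon)$ with $\tilde X(t)>0$; for $\varepsilon < \theta(\omega)$ this $t$ also satisfies $X(t) = \tilde X(t) > 0$, so $\tau = 0$ $\mathbb{P}$-a.s. The principal obstacle in the whole plan is the integrability hurdle in case 2; the modification $(\tilde b,\tilde\sigma)$ circumvents it by trading the global information we lack for a harmless constant tail after the random localization time $\theta$, while preserving the original process on the short interval where $\tau$ is actually being tested.
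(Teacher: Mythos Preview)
Your proof is correct and follows essentially the same approach as the paper: both cases reduce to the corresponding conditions of \propref{p-samplepath}, with identical stopping times in case~1 and the same localization idea (bound $\sigma$ below and $|b|$ above near $t=0$) in case~2. Your treatment of case~2 is in fact a bit more careful than the paper's, which applies condition~(2) directly with the random time $T=T_1\wedge T_2$, whereas you explicitly modify the coefficients so that \eqref{eq:psi} holds on a deterministic interval; this sidesteps the minor informality in the paper's argument while preserving the same underlying idea.
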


\begin{proof}
  The proof is carried out in two steps.

  \begin{enumerate}
  \item   If $b(0)>0$,
    we can take $\theta = \inf\{t>0: b(t,\omega) \le \frac
    1 2 b(0).\}$. Because of the continuity of $b$, $\theta$ satisfies
    \eqref{eq:theta}. Thus, $\tau = 0$.
  \item
    By virtue of the continuity of $\sigma$, $\eta$ in \eqref{eq:eta} is
    zero. Let $T_1 = \inf\{ t>0: \sigma(t) \le \frac 1 2 \sigma(0)\}$,
    and $T_2 = \inf\{t>0: |b(t)| \le  |b(0)| + 1$. Set $T = T_1\wedge
    T_2$, then $\psi = b/\sigma$ satisfies Nivikov condition
    \eqref{eq:psi}.  Thus,
    $\tau = 0$.
  \end{enumerate}
\end{proof}

We need the following lemma to prove
\thmref{t-vcon}.  The proof of \lemref{l-fox} is
elementary and is thus omitted.

\begin{lem}
  \label{l-fox}
  Let $f: [0,\infty)
  \mapsto
  \mathbb{R}^+$ be a continuous function satisfying
  \begin{equation*}
    \label{eq:l-fox1}
    \lim_{x\to \infty} \frac{f(x)}{x} = 0,
  \end{equation*}
  and denote  $f^*(x) = \max\{f(y): y\le x\}.$
  Then
  \begin{equation*}
    \label{eq:l-fox2}
    \lim_{x\to \infty} \frac{f^*(x)}{x}  = 0.
  \end{equation*}
\end{lem}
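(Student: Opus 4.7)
The plan is a direct $\varepsilon$-argument splitting the interval $[0,x]$ into a bounded piece on which $f$ is uniformly controlled by continuity, and a tail piece on which the hypothesis $f(y)/y \to 0$ kicks in.

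Fix $\varepsilon>0$. Using $\lim_{y\to\infty}f(y)/y=0$, first I would choose $M>0$ such that $f(y) \le (\varepsilon/2)\,y$ for every $y\ge M$. Since $f$ is continuous on the compact interval $[0,M]$, it is bounded there; let $C = \max\{f(y): 0\le y\le M\}$, which is finite.

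Next, for any $x\ge M$, I would split
\begin{equation*}
f^*(x) = \max\bigl\{\max_{0\le y\le M} f(y),\ \max_{M\le y\le x} f(y)\bigr\}
\le \max\bigl\{C,\ (\varepsilon/2)\,x\bigr\},
\end{equation*}
where the tail bound uses that $f(y)\le (\varepsilon/2)y \le (\varepsilon/2)x$ whenever $M\le y\le x$. Dividing by $x$,
\begin{equation*}
\frac{f^*(x)}{x} \le \max\Bigl\{\frac{C}{x},\ \frac{\varepsilon}{2}\Bigr\}.
\end{equation*}

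Finally, taking $x\ge \max\{M,\,2C/\varepsilon\}$ forces $C/x \le \varepsilon/2$, so $f^*(x)/x \le \varepsilon/2 < \varepsilon$. Since $\varepsilon>0$ was arbitrary, $\lim_{x\to\infty} f^*(x)/x = 0$. There is no real obstacle here; the only subtlety is making sure the constant $C$ from the compact piece $[0,M]$ is genuinely finite, which is immediate from continuity of $f$ on $[0,M]$ and the fact that $f$ is $\mathbb{R}^+$-valued (so the maximum is attained). This is presumably why the authors state that the proof is elementary and omit it.
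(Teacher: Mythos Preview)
Your proof is correct. The paper explicitly omits the proof of this lemma as elementary, and your standard $\varepsilon$-splitting argument (compact piece versus tail) is exactly the kind of routine verification the authors had in mind; there is nothing to compare against.
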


\bibliographystyle{plain}
\bibliography{refs}

\end{document}